\documentclass[showpacs,showkeys,preprintnumbers,amsmath,amssymb,aip,groupedaddress]{revtex4-1}

\usepackage{graphicx}       
\usepackage{psfrag}
\usepackage{tikz}
\usepackage{mathptmx}
\usepackage{amsmath,amscd, amsthm}
\usetikzlibrary{arrows,shapes,backgrounds}

\usepackage{ifthen}
\newboolean{includefigs}
\setboolean{includefigs}{true}
\newboolean{psfigs}
\setboolean{psfigs}{true}
\newboolean{tikzfigs}
\setboolean{tikzfigs}{false}

\newcommand{\condcomment}[2]{\ifthenelse{#1}{#2}{}}
\def\celine#1#2{\textbf{#2}}

\newtheorem{thm}{Theorem}[section]
\newtheorem{cor}[thm]{Corollary}
\newtheorem{lem}[thm]{Lemma}

\newtheorem{defn}[thm]{Definition}

\bibliographystyle{apsrev}

\begin{document}

\title{Affine extensions of non-crystallographic Coxeter groups induced by projection}

\author{Pierre-Philippe Dechant}
\email{pierre-philippe.dechant@durham.ac.uk}
\affiliation{Ogden Centre for Fundamental Physics, Department of Physics, University of Durham, South Rd, Durham DH1 3LE, UK and \\Department of Mathematics, York Centre for Complex Systems Analysis, University of York, Heslington, York, UK} 

\author{C\'eline B\oe hm}
\email{c.m.boehm@durham.ac.uk}
\affiliation{Ogden Centre for Fundamental Physics, Department of Physics, University of Durham, South Rd, Durham DH1 3LE, UK}

\author{Reidun Twarock} \email{rt507@york.ac.uk}
\affiliation{Department of Mathematics, York Centre for Complex Systems Analysis, University of York, Heslington, York, UK}

\preprint{IPPP/11/65, DCPT/11/130}

\date{\today}

\begin{abstract}
	In this paper, we show that affine extensions of non-crystallographic Coxeter groups can be derived via Coxeter-Dynkin diagram foldings and projections of affine extended versions of the root systems   $E_8$, $D_6$ and $A_4$. We show that the induced affine extensions of the non-crystallographic groups $H_4$, $H_3$ and $H_2$ correspond to a distinguished subset of the Kac-Moody-type extensions considered in  \cite{DechantTwarockBoehm2011H3aff}. This class of extensions was  motivated by physical applications in icosahedral systems in biology (viruses), physics (quasicrystals) and chemistry (fullerenes). By connecting these here to extensions of $E_8$, $D_6$ and $A_4$, we place them into the broader context of crystallographic lattices such as $E_8$, suggesting their  potential for applications in high energy physics, integrable systems and modular form theory. 
	By inverting the projection, we make the case for admitting
	different number fields in the Cartan matrix, which could open up enticing
	possibilities in hyperbolic geometry and rational conformal field theory.
\end{abstract}

\pacs{02.20.-a
, 61.44.Br
, 04.50.-h
, 04.60.Cf
, 11.25.Yb
, 12.10.Dm
}
 
\keywords{Coxeter groups, Kac-Moody theory, $E_8$, root systems, affine extensions, particle physics, string theory, quasicrystals, fullerenes, viruses}

\maketitle

\section{Introduction}

The classification of finite-dimensional simple Lie algebras by Cartan and Killing is one of the mile stones of modern mathematics. The study of these algebras is essentially reduced to that of root systems and their Weyl groups, and all their geometric content is contained in  Cartan matrices and visualised in Dynkin diagrams. The problem ultimately amounts to classifying all possible Cartan matrices \cite{FuchsSchweigert1997}. 

Coxeter groups  describe (generalised) reflections \cite{Coxeter1934discretegroups}, and thus encompass the above Weyl groups, which are the reflective symmetry groups of the relevant root systems. In fact, the finite Coxeter groups are precisely the finite Euclidean reflection groups \cite{Coxeter1935Enumeration}. However, since the root systems arising in Lie Theory are related to lattices, the Weyl groups are automatically crystallographic in  nature. Non-crystallographic   Coxeter groups, i.e. those that do not stabilise any lattice (in the dimension equal to their rank), therefore cannot arise in the Lie Theory context, and as a consequence, they have not been studied as intensely.  
They include the groups $H_2$, $H_3$ and the largest non-crystallographic group $H_4$;   the icosahedral group $H_3$ and its rotational subgroup $I$ are of particular practical importance as $H_3$ is the largest discrete symmetry group of physical space. Thus, many 3-dimensional systems with  `maximal symmetry', like  viruses in biology \cite{Stockley2010emerging, Caspar:1962, Twarock:2006b, Janner:2006b, Zandi:2004}, fullerenes in chemistry \cite{Kroto:1985,Kroto:1992, Twarock:2002b, Kustov:2008} and quasicrystals in physics \cite{Katz:1989, Senechal:1996, MoodyPatera:1993b, Levitov:1988}, can be modeled using Coxeter groups. 

Affine Lie algebras have also been studied for a long time, and many of the salient features of the theory of simple Lie algebras carry over to the affine case. More recently, Kac-Moody Theory has provided another  framework in which generalised Cartan matrices induce interesting algebraic structures that preserve many of the  features encountered in the simple and affine cases \cite{Kac1994InfDimLA}. However, such considerations again only give rise to extensions of crystallographic Coxeter groups. These infinite Coxeter groups are usually constructed directly from the finite Coxeter groups by introducing affine reflection planes (planes not containing the origin). While these infinite {counterparts} to the crystallographic Coxeter groups have  been intensely studied \cite{Humphreys1990Coxeter}, much less is known about their non-crystallographic {counterparts} \cite{Twarock:2002a}. 
Recently, we have derived novel affine extensions of the non-crystallographic Coxeter groups $H_2$, $H_3$ and $H_4$ in  two, three and four dimensions, based on an extension of their Cartan matrices following the Kac-Moody formalism in Lie Theory  \cite{DechantTwarockBoehm2011H3aff}. 

In this paper, we develop a different approach and induce such affine extensions of the non-crystallographic groups $H_2$, $H_3$ and $H_4$ from affine extensions of the crystallographic groups $A_4$, $D_6$ and $E_8$, via  projection from the higher-dimensional setting. 
 Specifically, there exists a projection from the  root system of $E_8$, the largest exceptional Lie algebra, to the  root system of $H_4$,  the largest non-crystallographic Coxeter group \cite{Humphreys1990Coxeter}, and,  due to the inclusions $A_4\subset D_6\subset E_8$ and $H_2\subset H_3 \subset H_4$,  also corresponding projections for the other non-crystallographic Coxeter groups.  

We apply these projections here to the extended root systems of the groups $A_4$, $D_6$ and $E_8$. As  expected,  extending by a single node recovers only those affine extensions known in the literature. However, we also consider simply-laced extensions with two additional nodes in the Kac-Moody formalism, and consider their compatibility with the projection formalism.
Specifically, we use the projection of the affine root as an affine root for the projected root system, and
thereby find a distinguished subset of the solutions in the classification scheme presented in   \cite{DechantTwarockBoehm2011H3aff}.

The $E_8$ root system, and the related structures:   the $E_8$ lattice, the Coxeter group, the Lie algebra and the Lie group, are `exceptional' structures, and are  of critical importance in mathematics and in theoretical physics \cite{FuchsSchweigert1997}. For instance, they occur in the context of Lie algebras, simple group theory and modular form theory, as well as lattice packing theory \cite{Conway1987Packings, Torquato2009DensePackings, KeefDechantTwarock2012Packings}.
In theoretical physics, $E_8$ is central to String Theory, as it is the gauge group for the $E_8 \times E_8$ heterotic string \cite{Gross1985HE}. More recently, via the Ho{\v r}ava-Witten picture \cite{HoravaWitten1996HW, Witten1996CY, HoravaWitten1996} and other developments \cite{Nicolai1994E10, DamourHenneauxNicolai2002E10,DamourHenneauxNicolai2002Billiards, HennauxPersson2008SpacelikeSingularitiesAndHiddenSymmetriesofGravity,
HenneauxPerssonWesley2008CoxeterGroupStructure},  $E_8$ and its affine extensions and overextensions (e.g. $E_8^+$ and $E_8^{++}$) have emerged as the most likely candidates  for the underlying symmetry of M-Theory. It is also fundamental in the context of Grand Unified Theories \cite{GeorgiGlashow1974GUT,Bars1980E8GUT, HeckmanVafa2010E8FGUT}, as it is the largest irreducible group that can accommodate the Standard Model gauge group $SU(3)\times SU(2)\times U(1)$. Our new link between  affine extensions of crystallographic Coxeter groups such as $E_8^+$ and their non-crystallographic counterparts could thus turn out to be important in High Energy Physics, e.g. in String Theory or in possible extensions of the Standard Model above the TeV scale after null findings at the LHC.

The structure of this paper is as follows. Section \ref{MP} reviews some standard  results to provide the necessary background  for our novel construction. Section \ref{sec_Cox} discusses the basics of Coxeter groups. Section \ref{sec_proj} introduces the relationship between $E_8$ and $H_4$, and discusses how it manifests itself on the level of the root systems, the representation theory, and the Dynkin diagram foldings and projection formalism. Section \ref{sec_std_single} introduces  affine extensions of crystallographic Coxeter groups, and   presents the standard affine extensions of the groups relevant in our context. 
In Section \ref{sec_proj_aff}, we  compute where the affine roots of the standard extensions of the crystallographic groups  map under the projection formalism and examine the resulting induced affine extensions of the non-crystallographic groups. Section \ref{sec_autos} discusses Coxeter-Dynkin diagram automorphisms of the simple and affine  groups, and shows that the induced affine extensions are invariant under these automorphisms. In Section \ref{sec_further}, we consider  affine extending the crystallographic groups  by two nodes and show that these do not induce any further affine extensions. In Section \ref{sec_sum}, we briefly review the novel Kac-Moody-type extensions of non-crystallographic Coxeter groups from a recent paper and compare the induced extensions with the classification scheme presented there (Section \ref{sec_class}). In Section \ref{sec_concl}, {we conclude that in a wide class of extensions (single extensions or simply-laced double extensions with trivial projection kernel), the ten induced cases considered here are the only ones that are compatible with the projection}. We also discuss how lifting affine extensions of non-crystallographic groups to the crystallographic setting, as well as symmetrisability of the resulting matrices,  motivate a study of Cartan matrices over extended number fields.


\section{Mathematical Preliminaries}\label{MP}

In this section, we introduce the context of our construction, with the relevant concepts and the known links between them, as illustrated in Fig. \ref{EDAH_overview}.  
We introduce Coxeter groups and their root systems in Section \ref{sec_Cox}, and discuss how certain crystallographic and non-crystallographic groups are related via projection (Section \ref{sec_proj}). Affine extensions of the crystallographic Coxeter groups are introduced in Section \ref{sec_std_single}. Affine extensions of the non-crystallographic Coxeter groups in dimensions two, three and four  have been discussed in our previous papers  \cite{DechantTwarockBoehm2011H3aff,Twarock:2002a} (see Section \ref{sec_sum}). Here, we present a different construction of such affine extensions, by inducing them from the known affine extensions of the crystallographic Coxeter groups via projection from the higher-dimensional setting. These induced extensions will be shown to be a subset of those derived in  \cite{DechantTwarockBoehm2011H3aff}.

\condcomment{\boolean{psfigs}}{
\begin{figure}
\begin{center}

	\includegraphics[width=15cm]{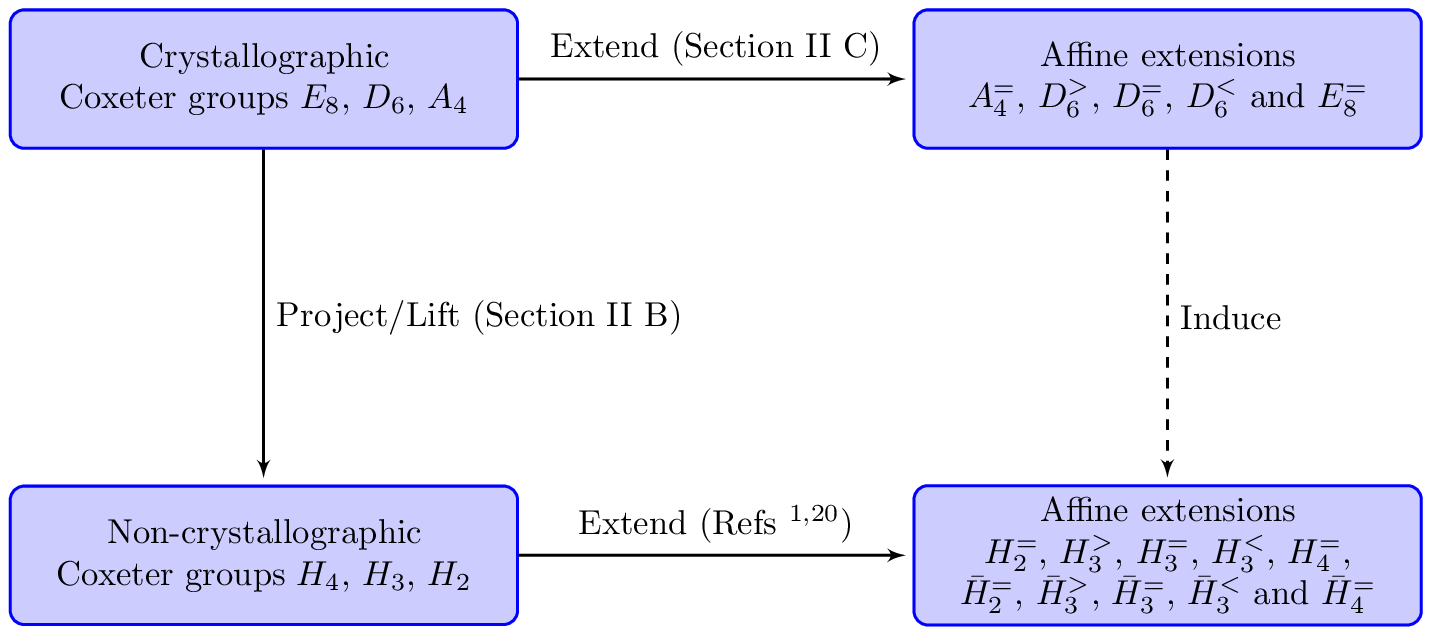}

\caption[Overview]{Context of this paper: Section \ref{sec_Cox} introduces Coxeter groups (left), and Section \ref{sec_proj} discusses how certain crystallographic and non-crystallographic groups are related via projection (left arrow). Section \ref{sec_std_single} discusses the known affine extensions of the crystallographic Coxeter groups (upper arrow), and affine extensions of non-crystallographic Coxeter groups have been discussed in  \cite{DechantTwarockBoehm2011H3aff,Twarock:2002a} (lower arrow). In this paper, we present a novel way of  inducing affine extensions of the non-crystallographic groups via projection from the affine extensions of the crystallographic groups  (dashed arrow on the right), yielding a distinguished subset of those derived in  \cite{DechantTwarockBoehm2011H3aff}.}
\label{EDAH_overview}
\end{center}
\end{figure}}

\celine{
\begin{center}
\begin{figure}
\begin{tikzpicture}
[auto,
decision/.style={diamond, draw=blue, thick, fill=blue!20,
text width=4.5em, text badly centered,
inner sep=1pt},
block/.style ={rectangle, draw=blue, thick, fill=blue!20,
text width=14em, text centered, rounded corners,
minimum height=4em},
line/.style ={draw, thick, -latex',shorten >=2pt},
cloud/.style ={draw=red, thick, ellipse,fill=red!20,
minimum height=2em}]
\matrix [column sep=20mm,row sep=17mm]
{
 \node [block] (C) {Crystallographic\\ Coxeter groups $E_8$, $D_6$, $A_4$}; & &\node [block] (Caff) {Affine extensions \\$A_4^=$, $D_6^>$, $D_6^=$, $D_6^<$ and $E_8^=$}; \\
 \\
\node [block] (NC) {Non-crystallographic \\ Coxeter groups $H_4$, $H_3$, $H_2$}; & & \node [block] (NCaff) {Affine extensions \\$H_2^=$, $H_3^>$, $H_3^=$, $H_3^<$ and $H_4^=$}; \\
};
\begin{scope}[every path/.style=line]
\path (C)  to  node [midway]   {Project/Lift (Section \ref{sec_proj})} (NC);
\path (C) to   node [midway] {Extend (Section \ref{sec_std_single})} (Caff);
\path (Caff) to [dashed]  node [midway] {Induce} (NCaff);
\path (NC) to   node [midway] {Extend ( \cite{DechantTwarockBoehm2011H3aff,Twarock:2002a})} (NCaff);
\end{scope}
\end{tikzpicture}
\caption[Overview]{Context of this paper: Section \ref{sec_Cox} introduces Coxeter groups (left), and Section \ref{sec_proj} discusses how certain crystallographic and non-crystallographic groups are related via projection (left arrow). Section \ref{sec_std_single} discusses the known affine extensions of the crystallographic Coxeter groups (upper arrow), and affine extensions of non-crystallographic Coxeter groups have been discussed in  \cite{DechantTwarockBoehm2011H3aff,Twarock:2002a} (lower arrow). In this paper, we  induce affine extensions of the non-crystallographic groups in a novel way from the affine extensions of the crystallographic groups via projection (dashed arrow on the right). }
\label{EDAH_overview}
\end{figure}
\end{center}}{}

\subsection{Finite Coxeter groups and root systems} \label{sec_Cox}
\begin{defn}[Coxeter group] A \emph{Coxeter group} is a group generated by some involutive generators $s_i, s_j \in S$ subject to relations of the form $(s_is_j)^{m_{ij}}=1$ with $m_{ij}=m_{ji}\ge 2$ for $i\ne j$. The matrix $A$ with entries $A_{ij}=m_{ij}$ is called the \emph{Coxeter matrix}. 
\end{defn}

The  finite Coxeter groups have a geometric representation where the involutions are realised as reflections at hyperplanes through the origin in a Euclidean vector space $\mathcal{E}$. In particular, let $(\cdot \vert \cdot)$ denote the inner product in $\mathcal{E}$, and $\lambda$, $\alpha\in\mathcal{E}$.   
\begin{defn}[Reflection] 
The generator $s_\alpha$ corresponds to the \emph{reflection}
\begin{equation}\label{reflect}
s_\alpha: \lambda\rightarrow s_\alpha(\lambda)=\lambda - 2\frac{(\lambda\vert\alpha)}{(\alpha\vert\alpha)}\alpha
\end{equation}
 at a hyperplane perpendicular to the  \emph{root vector} $\alpha$. 
\end{defn}

The action of the Coxeter group is  to permute these root vectors, and its  structure is thus encoded in the collection  $\Phi\in \mathcal{E}$ of all such root vectors,  the root system: 
\begin{defn}[Root system] 
A \emph{root system} $\Phi$ is a finite set of non-zero vectors in $\mathcal{E}$ such that  the following two conditions hold: 
\begin{enumerate}
\item $\Phi$ only contains a root $\alpha$ and its negative, but no other scalar multiples: $\Phi \cap \mathbb{R}\alpha=\{-\alpha, \alpha\}\,\,\,\,\forall\,\, \alpha \in \Phi$. 
\item $\Phi$ is invariant under all reflections corresponding to vectors in $\Phi$: $s_\alpha\Phi=\Phi \,\,\,\forall\,\, \alpha\in\Phi$.
\end{enumerate}
\end{defn}
For a crystallographic Coxeter group, a subset $\Delta$ of $\Phi$, called \emph{simple roots}, is sufficient to express every element of $\Phi$ via a $\mathbb{Z}$-linear combination with coefficients of the same sign. 
$\Phi$ is therefore  completely characterised by this basis of simple roots, which in turn completely characterises the Coxeter group. In the case of the non-crystallographic Coxeter groups $H_2$, $H_3$ and $H_4$, the same holds for the extended integer ring $\mathbb{Z}[\tau]=\lbrace a+\tau b| a,b \in \mathbb{Z}\rbrace$, where $\tau$ is   the golden ratio $\tau=\frac{1}{2}(1+\sqrt{5})$. Note that together with its  Galois conjugate $\tau'\equiv \sigma=\frac{1}{2}(1-\sqrt{5})$, $\tau$ satisfies the quadratic equation $x^2=x+1$.  
In the following, we will call the exchange of $\tau$ and $\sigma$  \emph{Galois conjugation}, and denote it by $x\rightarrow \bar{x}=x(\tau\leftrightarrow \sigma)$.

The structure of the set of simple roots is encoded in the Cartan matrix, which contains the geometrically invariant information of the root system as follows: 
\begin{defn}[Cartan matrix and Coxeter-Dynkin diagram] The  \emph{Cartan matrix} of a set of simple roots $\alpha_i\in\Delta$ is defined as the matrix
\begin{equation}\label{CM}
	A_{ij}=2\frac{(\alpha_i\vert \alpha_j)}{(\alpha_i\vert \alpha_i)}.
\end{equation}
A graphical representation of the geometric content is given by \emph{Coxeter-Dynkin diagrams}, in which nodes correspond to simple roots, orthogonal roots are not connected, roots at $\frac{\pi}{3}$ have a simple link, and other angles $\frac{\pi}{m}$ have a link with a label $m$. 
\end{defn}

 Note that  Cartan matrix entries of $\tau$ and $\sigma$ yield Coxeter diagram labels of $5$ and $\frac{5}{2}$, respectively,  since in the simply-laced setting $A_{ij}=-2\cos\frac{\pi}{m_{ij}}$, ${\tau}=2\cos \frac{\pi}{5}$ and ${-\sigma}=2\cos \frac{2\pi}{5}$.  Such fractional values can also be understood as angles in hyperbolic space \cite{Coxeter1973regular}. By the crystallographic restriction theorem, there are no lattices (i.e. periodic structures) with such non-crystallographic symmetry $H_2$, $H_3$ and $H_4$ in two, three and four dimensions, respectively.
For these non-crystallographic Coxeter groups one therefore needs to move from a lattice to a quasilattice setting.

\subsection{From $E_8$ to $H_4$: standard Dynkin diagram foldings and projections}\label{sec_proj}
The largest exceptional (crystallographic) Coxeter group $E_8$ and the largest non-crystallographic Coxeter group $H_4$ are closely related. 
This connection between $E_8$ and $H_4$ can be exhibited in various ways, including Coxeter-Dynkin diagram foldings in the Coxeter group picture \cite{Shcherbak:1988}, relating the root systems \cite{MoodyPatera:1993b, Koca:1998, Koca:2001}, and in terms of the representation theory \cite{Shcherbak:1988, MoodyPatera:1993b, Koca:1998, Koca:2001, Katz:1989}. For illustrative purposes, we focus on the folding picture first. 

Following \cite{Shcherbak:1988}, we consider the Dynkin diagram of $E_8$ (top left of Fig. \ref{figE8}), where  we have  labeled the simple roots $\alpha_1$ to  $\alpha_8$. We fold the diagram suggestively (bottom left of Fig. \ref{figE8}), and define the 
combinations $s_{\beta_1}=s_{\alpha_1} s_{\alpha_7}$,   $s_{\beta_2}=s_{\alpha_2} s_{\alpha_6}$, $s_{\beta_3}=s_{\alpha_3} s_{\alpha_5}$ and   $s_{\beta_4}=s_{\alpha_4} s_{\alpha_8}$. It can be shown that the subgroup with the generators $\beta_i$ is in fact isomorphic to $H_4$ (top right) \cite{Bourbaki1981Lie, Shcherbak:1988}. This amounts to demanding that the simple roots of $E_8$ project onto the simple roots of $H_4$ and their $\tau$-multiples, as denoted on the bottom right of Fig. \ref{figE8}. One choice of simple roots for $H_4$ is  $a_1=\frac{1}{2}(-\sigma, -\tau, 0, -1)$, $a_2=\frac{1}{2}(0, -\sigma, -\tau,  1)$, $a_3=\frac{1}{2}(0,1, -\sigma, -\tau)$ and $a_4=\frac{1}{2}(0, -1, -\sigma, \tau)$, and in that case the highest root is  $\alpha_H=(1,0,0,0)$.  In the bases of simple roots $\alpha_i$ and $a_i$, the projection is given by
\begin{equation}
	\pi_\parallel = \begin{pmatrix}
	    1&0&0&0&0&0&\tau&0
	\\   0&1&0&0&0&\tau&0&0
	\\  0&0&1&0&\tau&0&0&0
	\\  0&0&0&\tau&0&0&0&1
	 \end{pmatrix}. \label{projpi}
\end{equation}

\condcomment{\boolean{includefigs}}{
\condcomment{\boolean{tikzfigs}}{
\begin{figure}
	\begin{center}
       \begin{tabular}{@{}c@{ }c@{ }}
		\begin{tikzpicture}[
		    knoten/.style={        circle,      inner sep=.15cm,        draw}
		   ]

		  \node at  (1,.5) (knoten1) [knoten,  color=white!0!black] {};
		  \node at  (2.5,.5) (knoten2) [knoten,  color=white!0!black] {};
		  \node at  (4,.5) (knoten3) [knoten,  color=white!0!black] {};
		  \node at  (5.5,.5) (knoten4) [knoten,  color=white!0!black] {};

		  \node at  (7,.5) (knoten5) [knoten,  color=white!0!black] {};
		  \node at (8.5,.5) (knoten6) [knoten,  color=white!0!black] {};
		  \node at (10,.5) (knoten7) [knoten,  color=white!0!black] {};
		  \node at (7,2.0) (knoten8) [knoten,  color=white!0!black] {};

		\node at  (1,0)  (alpha1) {$\alpha_1$};
		\node at  (2.5,0)  (alpha2) {$\alpha_2$};
		\node at  (4,0)  (alpha3) {$\alpha_3$};
		\node at  (5.5,0)  (alpha4) {$\alpha_4$};
		\node at  (7,0)  (alpha7) {$\alpha_5$};
		\node at (8.5,0)  (alpha6) {$\alpha_6$};
		\node at (10,0)  (alpha5) {$\alpha_7$};
		\node at (7,2.5) (alpha8) {$\alpha_8$};

		  \path  (knoten1) edge (knoten2);
		  \path  (knoten2) edge (knoten3);
		  \path  (knoten3) edge (knoten4);
		  \path  (knoten4) edge (knoten5);
		  \path  (knoten5) edge (knoten6);
		  \path  (knoten6) edge (knoten7);
		  \path  (knoten5) edge (knoten8);

		\end{tikzpicture}& \hspace{0.3 cm} 
 		\begin{tikzpicture}[
		    knoten/.style={        circle,      inner sep=.15cm,        draw}  
		   ]

		  \node at (1,0.5) (knoten1) [knoten,  color=white!0!black] {};
		  \node at (3,0.5) (knoten2) [knoten,  color=white!0!black] {};
		  \node at (5,0.5) (knoten3) [knoten,  color=white!0!black] {};
		  \node at (7,0.5) (knoten4) [knoten,  color=white!0!black] {};

		\node at (1,0)  (a1) {$a_1$};
		\node at (3,0)  (a2) {$a_2$};
		\node at (5,0)  (a3) {$a_3$};
		\node at (6,0.75)  (tau) {$5$};
		\node at (7,0)  (a4) {$a_4$};

		  \path  (knoten1) edge (knoten2);
		  \path  (knoten2) edge (knoten3);
		  \path  (knoten3) edge (knoten4);

		\end{tikzpicture}	

\\
\vspace{0.5cm}\\
    \large $\Downarrow$ \small{fold} &  \large $\Uparrow$  \small{$\mathbb{Z}\rightarrow \mathbb{Z}[\tau]$}\\
\vspace{0.5cm}\\
	
\begin{tikzpicture}[
    knoten/.style={        circle,      inner sep=.15cm,        draw}
   ]
  
  \node at (1,1.5) (knoten1) [knoten,  color=white!0!black] {};
  \node at (3,1.5) (knoten2) [knoten,  color=white!0!black] {};
  \node at (5,1.5) (knoten3) [knoten,  color=white!0!black] {};
  \node at (7,1.5) (knoten4) [knoten,  color=white!0!black] {};

  \node at (1,.5) (knoten5) [knoten,  color=white!0!black] {};
  \node at (3,.5) (knoten6) [knoten,  color=white!0!black] {};
  \node at (5,.5) (knoten7) [knoten,  color=white!0!black] {};
  \node at (7,.5) (knoten8) [knoten,  color=white!0!black] {};

\node at (1,2)  (alpha1) {$\alpha_1$};
\node at (3,2)  (alpha2) {$\alpha_2$};
\node at (5,2)  (alpha3) {$\alpha_3$};
\node at (7,2)  (alpha4) {$\alpha_4$};
\node at (1,0)  (alpha7) {$\alpha_7$};
\node at (3,0)  (alpha6) {$\alpha_6$};
\node at (5,0)  (alpha5) {$\alpha_5$};
\node at (7,0)  (alpha8) {$\alpha_8$};
\node at (9,1.4)  (pi) {$\pi_\parallel$};
\node at (9,0.9)  (ra) {$\Rightarrow$};
\node at (9,0.4)  (rp) {$\text{project}$};

  \path  (knoten1) edge (knoten2);
  \path  (knoten2) edge (knoten3);
  \path  (knoten3) edge (knoten4);
  \path  (knoten5) edge (knoten6);
  \path  (knoten6) edge (knoten7);
  \path  (knoten7) edge (knoten8);
  \path  (knoten4) edge (knoten7);
 
\end{tikzpicture}
	 \large  &
	\begin{tikzpicture}[
	    knoten/.style={        circle,      inner sep=.15cm,        draw}
	   ]

	  \node at (1,1.5) (knoten1) [knoten,  color=white!0!black] {};
	  \node at (3,1.5) (knoten2) [knoten,  color=white!0!black] {};
	  \node at (5,1.5) (knoten3) [knoten,  color=white!0!black] {};
	  \node at (7,1.5) (knoten4) [knoten,  color=white!0!black] {};

	  \node at (1,.5) (knoten5) [knoten,  color=white!0!black] {};
	  \node at (3,.5) (knoten6) [knoten,  color=white!0!black] {};
	  \node at (5,.5) (knoten7) [knoten,  color=white!0!black] {};
	  \node at (7,.5) (knoten8) [knoten,  color=white!0!black] {};

	\node at (1,2)  (alpha1) {$a_1$};
	\node at (3,2)  (alpha2) {$a_2$};
	\node at (5,2)  (alpha3) {$a_3$};
	\node at (7,2)  (alpha4) {$\tau a_4$};
	\node at (1,0)  (alpha7) {$\tau a_1$};
	\node at (3,0)  (alpha6) {$\tau a_2$};
	\node at (5,0)  (alpha5) {$\tau a_3$};
	\node at (7,0)  (alpha8) {$a_4$};

	  \path  (knoten1) edge (knoten2);
	  \path  (knoten2) edge (knoten3);
	  \path  (knoten3) edge (knoten4);
	  \path  (knoten5) edge (knoten6);
	  \path  (knoten6) edge (knoten7);
	  \path  (knoten7) edge (knoten8);
	  \path  (knoten4) edge (knoten7);

	\end{tikzpicture}
  \\
  \end{tabular}
  \caption[$E_8$]{Coxeter-Dynkin diagram folding and projection from $E_8$ to $H_4$. The nodes correspond to simple roots and links labeled $m$ encode an angle of $\frac{\pi}{m}$ between the root vectors, with $m$ omitted if the angle is $\frac{\pi}{3}$ and no link shown if $\frac{\pi}{2}$. Note that deleting nodes $\alpha_1$ and $\alpha_7$ yields corresponding results for $D_6\rightarrow H_3$, and likewise for  $A_4\rightarrow H_2$ by further removing $\alpha_2$ and $\alpha_6$.}
\label{figE8}
\end{center}
\end{figure}}

\condcomment{\boolean{psfigs}}{
\begin{figure}
	\begin{center}
\includegraphics[width=16cm]{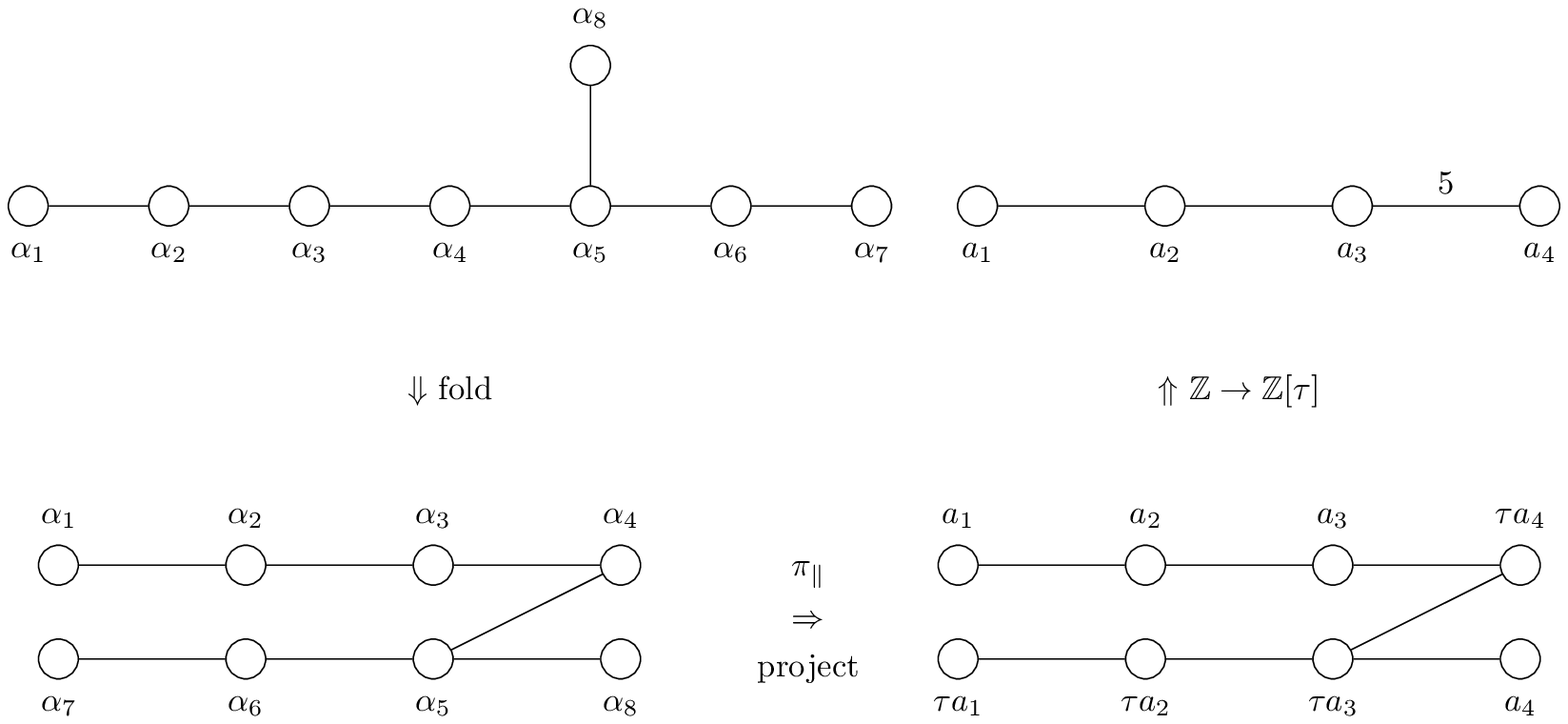}
  \caption[$E_8$]{Coxeter-Dynkin diagram folding and projection from $E_8$ to $H_4$. The nodes correspond to simple roots and links labeled $m$ encode an angle of $\frac{\pi}{m}$ between the root vectors, with $m$ omitted if the angle is $\frac{\pi}{3}$ and no link shown if $\frac{\pi}{2}$. Note that deleting nodes $\alpha_1$ and $\alpha_7$ yields corresponding results for $D_6\rightarrow H_3$, and likewise for  $A_4\rightarrow H_2$ by further removing $\alpha_2$ and $\alpha_6$.}
\label{figE8}
\end{center}
\end{figure}

}}

There are similar diagrams for $A_4$ and $D_6$ that can be obtained from the $E_8$ diagram by deleting nodes. We display these in Fig. \ref{figD6} in order to set out our notation, as the conventional way of numbering the roots in the Dynkin diagrams differs from the natural numbering in the folding picture. The only non-trivial Coxeter relation is the one corresponding to the  5-fold rotation (e.g. the relation between $\beta_3$ and $\beta_4$ for the case of $E_8$).  The additional $\beta$-generators in the higher-dimensional cases are trivial, as they can be straightforwardly shown to satisfy the relevant Coxeter relations (corresponding to 3-fold rotations) directly from the original Coxeter relations for the  $\alpha_i$s of the larger groups.

It has been observed that the $E_8$ root vectors can be realised in terms of  unit quaternions with coefficients in  $\mathbb{Z}[\tau]$ \cite{Tits1980Quaternions, MoodyPatera:1993b, Koca:1998, Koca:2001}. Specifically, the set of 120  icosians forms a discrete group under standard quaternionic multiplication, and is  a realisation of the $H_4$ root system \cite{MoodyPatera:1993b, Dechant2012CoxGA, Dechant2012Polytopes}. 
The 240 roots of $E_8$ have been shown to be in 1-1-correspondence with the 120 icosians and their 120 $\tau$-multiples, so that, schematically, $E_8 \sim H_4+\tau H_4$ holds. 
The projection considered above therefore exhibits this mapping of the simple roots of $E_8$ onto the simple roots of $H_4$ and their $\tau$-multiples. Corresponding results hold for the other groups $D_6$ and $H_3$, as well as $A_4$ and $H_2$ by inclusion. 

\condcomment{\boolean{includefigs}}{
\condcomment{\boolean{tikzfigs}}{
\begin{figure}
	\begin{center}
       \begin{tabular}{@{}c@{ }c@{ }c@{ }}
		\begin{tikzpicture}[
		    knoten/.style={        circle,      inner sep=.15cm,        draw}
		   ]
\node at (0,0) {};
		  \node at  (1,.5) (knoten1) [knoten,  color=white!0!black] {};
		  \node at  (3,.5) (knoten2) [knoten,  color=white!0!black] {};
		  \node at  (5,.5) (knoten3) [knoten,  color=white!0!black] {};
		  \node at  (7,.5) (knoten4) [knoten,  color=white!0!black] {};
		  \node at  (9,.5) (knoten5) [knoten,  color=white!0!black] {};
		  \node at (7,2) (knoten6) [knoten,  color=white!0!black] {};

		\node at  (1,0)  (alpha1) {$\alpha_1$};
		\node at  (3,0)  (alpha2) {$\alpha_2$};
		\node at  (5,0)  (alpha3) {$\alpha_3$};
		\node at  (7,0)  (alpha4) {$\alpha_4$};
		\node at  (9,0)  (alpha5) {$\alpha_5$};
		\node at  (7,2.5)  (alpha6) {$\alpha_6$};

		  \path  (knoten1) edge (knoten2);
		  \path  (knoten2) edge (knoten3);
		  \path  (knoten3) edge (knoten4);
		  \path  (knoten4) edge (knoten5);
		  \path  (knoten4) edge (knoten6);

		\end{tikzpicture}& 
 	
			\begin{tikzpicture}[
			    knoten/.style={        circle,      inner sep=.15cm,        draw}  
			   ]
\node at (0,1.5) {};
			  \node at (1,2) (knoten1) [knoten,  color=white!0!black] {};
			  \node at (3,2) (knoten2) [knoten,  color=white!0!black] {};
			  \node at (5,2) (knoten3) [knoten,  color=white!0!black] {};

			\node at (1,1.5)  (a1) {$a_1$};
			\node at (3,1.5)  (a2) {$a_2$};
			\node at (4,2.25)  (tau) {$5$};
			\node at (5,1.5)  (a3) {$a_3$};

			  \path  (knoten1) edge (knoten2);
			  \path  (knoten2) edge (knoten3);

			\end{tikzpicture}
\\

	\begin{tikzpicture}[
	    knoten/.style={        circle,      inner sep=.15cm,        draw}
	   ]
	  \node at (1,3) {};
	  \node at (1,1.5) (knoten1) [knoten,  color=white!0!black] {};
	  \node at (3,1.5) (knoten2) [knoten,  color=white!0!black] {};
	  \node at (5,1.5) (knoten3) [knoten,  color=white!0!black] {};
	  \node at (7,1.5) (knoten4) [knoten,  color=white!0!black] {};

	\node at (1,2)  (alpha1) {$\alpha_1$};
	\node at (3,2)  (alpha2) {$\alpha_2$};
	\node at (5,2)  (alpha3) {$\alpha_3$};
	\node at (7,2)  (alpha4) {$\alpha_4$};

	  \path  (knoten1) edge (knoten2);
	  \path  (knoten2) edge (knoten3);
	  \path  (knoten3) edge (knoten4);

	\end{tikzpicture}
	& 
	\begin{tikzpicture}[
	    knoten/.style={        circle,      inner sep=.15cm,        draw}  ]

	  \node at (1,1.5) (knoten1) [knoten,  color=white!0!black] {};
	  \node at (3,1.5) (knoten2) [knoten,  color=white!0!black] {};
	\node at (1,2)  (a1) {$a_1$};
	\node at (2,1.75)  (tau) {$5$};
	\node at (3,2)  (a2) {$a_2$};
	  \path  (knoten1) edge (knoten2);

	\end{tikzpicture}   	

  \\
  \end{tabular}
  \caption[$D_6$]{We display the Dynkin diagrams for $D_6$ (top left), $A_4$ (bottom left), $H_3$ (top right) and $H_2$ (bottom right), in order to fix the labelling of the roots, such that in the following the Cartan matrices can be read more easily.}
\label{figD6}
\end{center}
\end{figure}}

\condcomment{\boolean{psfigs}}{
\begin{figure}

	\begin{center}
 		\includegraphics[width=16cm]{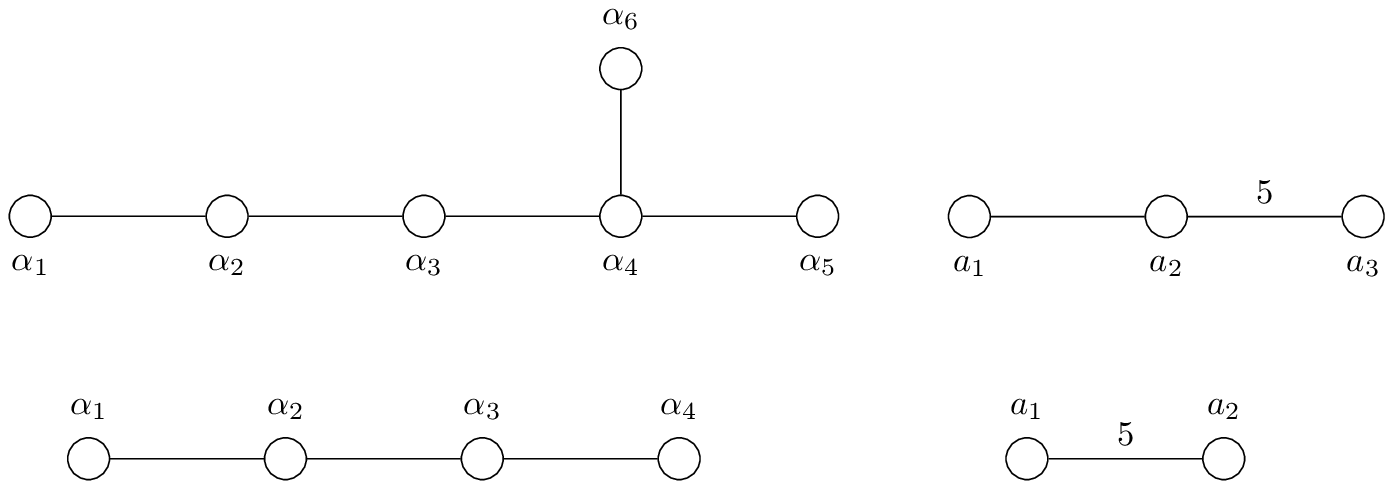}
  \caption[$D_6$]{We display the Dynkin diagrams for $D_6$ (top left), $A_4$ (bottom left), $H_3$ (top right) and $H_2$ (bottom right), in order to fix the labelling of the roots, such that in the following the Cartan matrices can be read more easily.}
\label{figD6}
\end{center}
\end{figure}}
}

From a group theoretic point of view, $E_8$ has two conjugate $H_4$-invariant subspaces. Following the terminology in  \cite{Shcherbak:1988}, we make the following definition. 
\begin{defn}[Standard and non-standard representations] We denote by the \emph{standard representation} of a Coxeter group the representation generated by mirrors forming angles $\frac{\pi}{m_{ij}}$, where $m_{ij}$ are the entries of the Coxeter matrix, i.e. $5$, $3$ and $2$ for the cases relevant here. One can also achieve a \emph{non-standard representation} of the Coxeter groups $H_i$ by instead taking mirrors at angles $2\pi/5$, i.e. by schematically going from a pentagon to a pentagram. 
\end{defn}
Note that these therefore have Coxeter diagram labels of $5$ and $\frac{5}{2}$, respectively. These are  non-equivalent irreducible representations of the non-crystallographic Coxeter groups, i.e. there is no similarity transformation that takes one to the other. Their characters are exchanged under the Galois automorphism $\tau \leftrightarrow \sigma$, and 
 the simple roots in the  non-standard representation are the Galois conjugates of the simple roots of the standard representation, for instance $a_1=\frac{1}{2}(-\sigma, -\tau, 0, -1) \leftrightarrow \bar{a}_1=\frac{1}{2}(-\tau, -\sigma, 0, -1)$. 
This different set of simple roots $\bar{a}_i$ yields a Cartan matrix that is the   Galois conjugate of the Cartan matrix of the simple roots ${a}_i$, and leads to the label of $\frac{5}{2}$ in the Coxeter diagram.
However,  in this finite-dimensional case the groups generated by the two sets of generators are  isomorphic, as both sets of roots give rise to the same root system, but positioned differently in space.  
The non-trivial Coxeter relations  are therefore in both cases $(s_is_j)^{m_{ij}}=1$ with $m_{ij}=5$ for $A_{ij}=-\tau, \sigma$. Hence the Coxeter matrix is identical in both cases; in particular, it is still symmetric. 
Therefore, one usually restricts analysis to the standard-representation, where the simple roots form obtuse angles, and whose Cartan matrix therefore satisfies the usual negativity requirements. 
The projection $\pi_\perp$ into the second $H_4$-invariant subspace in terms of the bases $\alpha_i$ and $\bar{a}_i$ is  the Galois conjugate of $\pi_\parallel$ in Eq. (\ref{projpi}), which, however, is with respect to the bases $\alpha_i$ and ${a}_i$.
The relationship between $E_8$ and $H_4$ is best understood as the standard representation of $E_8$ inducing both the standard and non-standard representations of the subgroup $H_4$. That is, $E_8$ decomposes under an $H_4$ subgroup as $\mathbf{8}=\mathbf{4}+\bar{\mathbf{4}}$ (c.f. also \cite{Koca:1998, Koca:2001}, in particular in the wider context of similar constructions unified in  the Freudenthal-Tits magic square \cite{Baez2001Octonions}). Equivalent statements $\mathbf{6}=\mathbf{3}+\bar{\mathbf{3}}$ and $\mathbf{4}=\mathbf{2}+\bar{\mathbf{2}}$ hold true for the lower-dimensional cases, and have found applications in the quasicrystal literature \cite{Katz:1989, Senechal:1996}. 
In the quasicrystal setting, one usually only considers the projection $\pi_\parallel$;  in our setting one can consider projection into either invariant subspace, using $\pi_\parallel$ as well as $\pi_\perp$.

\subsection{Affine extensions of crystallographic Coxeter groups}\label{sec_std_single}

For a crystallographic Coxeter group, an affine Coxeter group can be constructed by  defining  {affine hyperplanes} $H_{\alpha_0 ,i}$ as solutions to the equations $(x \vert \alpha_0 ) =i\,$, where $x\in \mathcal{E}$, $\alpha_0\in \Phi$ and $i\in\mathbb{Z}$  \cite{McCammond2010Coxeter}.
The  nontrivial isometry of $\mathcal{E}$ that fixes $H_{\alpha_0 ,i}$ pointwise is  unique and called an affine reflection $s^{aff}_{\alpha_0 ,i}$.

\begin{defn}[Affine Coxeter group] \label{defaffcox}
An \emph{affine Coxeter group} is the extension of a Coxeter group by an \emph{affine reflection} $s^{aff}_{\alpha_0}$ whose geometric action is given by
\begin{equation}\label{reflectaff}
s^{aff}_{\alpha_0} v = \alpha_0 + v - \frac{2(\alpha_0\vert v)}{(\alpha_0\vert\alpha_0)}\alpha_0,
\end{equation}
and is generated by the extended set of generators including the new affine reflection associated with the \emph{affine root} $\alpha_0$.
This operation is not distance-preserving, and hence the group is no longer compact.
The \emph{affine Cartan matrix} of the affine Coxeter group is the Cartan matrix associated with the extended set of roots. The non-distance preserving nature of the affine reflection entails that the affine Cartan matrix is degenerate (positive semi-definite), and thus fulfils $\det A =0$. If the group contains both $s^{aff}_{\alpha_0}$ and $s_{\alpha_0}$, it also includes the translation generator $Tv=v+\alpha_0=s^{aff}_{\alpha_0}s_{\alpha_0}v$; otherwise, $s^{aff}_{\alpha_0}s^{aff}_{-\alpha_0}$ acts as a translation of twice the length. 
\end{defn}

It is in fact possible to construct the affine Coxeter group directly from an extension of the Cartan matrix.
\begin{defn}[Kac-Moody-type affine extension]  \label{defKMtype} A \emph{Kac-Moody-type affine extension $A^{aff}$ of a Cartan matrix} is an extension of the Cartan matrix $A$ of a Coxeter group by further rows and columns such that the following conditions hold: 
\begin{itemize}
\item  The diagonal entries are normalised  as $A^{aff}_{ii} = 2$ according to the definition  in Eq.~(\ref{CM}).
\item  The additional matrix entries of $A^{aff}$ take values in the same integer ring as the entries of $A$. This includes potentially integer rings of extended number fields as in the case of $H_3$.
\item  For off-diagonal entries we have $A^{aff}_{ij}\le 0$; moreover, $A^{aff}_{ij} = 0\Leftrightarrow A^{aff}_{ji} = 0$.
\item  The affine extended matrix fulfils the \emph{determinant constraint} $\det A^{aff}=0$.
\end{itemize}
\end{defn}

In our previous paper \cite{DechantTwarockBoehm2011H3aff} we have laid out a rationale for Kac-Moody-type extensions of Cartan matrices, as well as consistency conditions that lead to a somewhat improved algorithm for numerically searching for such matrices. This was necessitated by our search for novel asymmetric affine extensions of  $H_2$, $H_3$ and $H_4$. Here, our algorithm simply recovers the affine extensions of $E_8$, $D_6$ and $A_4$ that are  well known in the literature for  affine extensions by a single node. However,  based on  Definition \ref{defKMtype}, we will also consider extending by two nodes in the context of the projection.

We begin with the case of $E_8$, which is the most interesting from a high energy physics point of view, and the largest exceptional Coxeter group. Various notations are used in the literature to denote its unique (standard) affine extension, but here we shall use $E_8^=$, where the equality sign is meant to signify that the extra root has the same length as the other roots, i.e. the affine extension is simply-laced (see Fig. \ref{figE8aff} for our notation). The affine root $\alpha_0$ that gives rise to this affine extension can  be expressed in terms of the root vectors of $E_8$ as  
\begin{equation}
-\alpha_0=2\alpha_1+3\alpha_2+4\alpha_3+5\alpha_4+6\alpha_5+4\alpha_6+2\alpha_7+3\alpha_8,\label{affroote8}
\end{equation}
which will prove important in the projection context later. 

\condcomment{\boolean{includefigs}}{
\condcomment{\boolean{tikzfigs}}{
\begin{figure}
	\begin{center}
\begin{tikzpicture}[scale=0.5,
knoten/.style={        circle,      inner sep=.1cm,        draw}
]
\node at (-1,.7) (knoten0) [knoten,  color=white!0!black] {};
\node at  (1,.7) (knoten1) [knoten,  color=white!0!black] {};
\node at  (3,.7) (knoten2) [knoten,  color=white!0!black] {};
\node at  (5,.7) (knoten3) [knoten,  color=white!0!black] {};
\node at  (7,.7) (knoten4) [knoten,  color=white!0!black] {};

\node at  (9,.7) (knoten5) [knoten,  color=white!0!black] {};
\node at (11,.7) (knoten6) [knoten,  color=white!0!black] {};
\node at (13,.7) (knoten7) [knoten,  color=white!0!black] {};
\node at (9,2.2) (knoten8) [knoten,  color=white!0!black] {};

\node at  (-1,0) (alpha0) {$\alpha_0$};
\node at  (1,0)  (alpha1) {$\alpha_1$};
\node at  (3,0)  (alpha2) {$\alpha_2$};
\node at  (5,0)  (alpha3) {$\alpha_3$};
\node at  (7,0)  (alpha4) {$\alpha_4$};
\node at  (9,0)  (alpha7) {$\alpha_5$};
\node at (11,0)  (alpha6) {$\alpha_6$};
\node at (13,0)  (alpha5) {$\alpha_7$};
\node at (9,2.8) (alpha8) {$\alpha_8$};

\path  (knoten0) edge (knoten1);
\path  (knoten1) edge (knoten2);
\path  (knoten2) edge (knoten3);
\path  (knoten3) edge (knoten4);
\path  (knoten4) edge (knoten5);
\path  (knoten5) edge (knoten6);
\path  (knoten6) edge (knoten7);
\path  (knoten5) edge (knoten8);

\end{tikzpicture} 
\end{center}
\caption[$E_8^=$]{Dynkin diagram for the standard affine extension of $E_8$, here denoted $E_8^=$.}
\label{figE8aff}
\end{figure}}

\condcomment{\boolean{psfigs}}{
\begin{figure}
	\begin{center}
		\includegraphics[width=9cm]{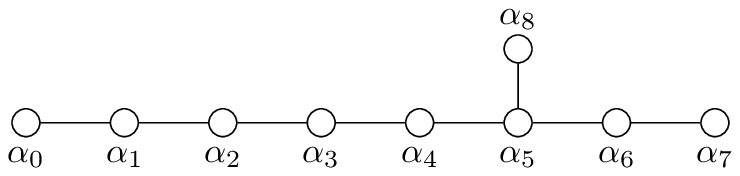}
\end{center}
\caption[$E_8^=$]{Dynkin diagram for the standard affine extension of $E_8$, here denoted $E_8^=$.}
\label{figE8aff}
\end{figure}}
}

Likewise, $D_6$ has a simply-laced affine extension, here denoted $D_6^=$ and depicted in Fig. \ref{figD6aff}. Again, the affine root can be expressed in terms of the other roots as
\begin{equation}
-\alpha_0=\alpha_1+2\alpha_2+2\alpha_3+2\alpha_4+\alpha_5+\alpha_6.\label{affroot2}
\end{equation}\

\condcomment{\boolean{includefigs}}{
\condcomment{\boolean{tikzfigs}}{
	\begin{figure}
		\begin{center}
\begin{tikzpicture}[scale=0.5,
    knoten/.style={        circle,      inner sep=.1cm,        draw}
   ]
  \node at (3,1) (knoten0) [knoten,  color=white!0!black] {};  
  \node at  (1,-.5) (knoten1) [knoten,  color=white!0!black] {};
  \node at  (3,-.5) (knoten2) [knoten,  color=white!0!black] {};
  \node at  (5,-.5) (knoten3) [knoten,  color=white!0!black] {};
  \node at  (7,-.5) (knoten4) [knoten,  color=white!0!black] {};
  \node at  (9,-.5) (knoten5) [knoten,  color=white!0!black] {};
  \node at (7,1) (knoten6) [knoten,  color=white!0!black] {};

\node at (3,1.7)  (alpha0) {$\alpha_0$};
\node at  (1,-1.2)  (alpha1) {$\alpha_1$};
\node at  (3,-1.2)  (alpha2) {$\alpha_2$};
\node at  (5,-1.2)  (alpha3) {$\alpha_3$};
\node at  (7,-1.2)  (alpha4) {$\alpha_4$};
\node at  (9,-1.2)  (alpha5) {$\alpha_5$};
\node at  (7,1.7)  (alpha6) {$\alpha_6$};

  \path  (knoten0) edge (knoten2);
  \path  (knoten1) edge (knoten2);
  \path  (knoten2) edge (knoten3);
  \path  (knoten3) edge (knoten4);
  \path  (knoten4) edge (knoten5);
  \path  (knoten4) edge (knoten6);

\node at (17,-0.5) (CM) 
{\small{$A \left(D_6^=\right)= \begin{pmatrix}
   2&0&-1&0&0&0&0  
\\ 0&2&-1&0&0&0&0  
\\ -1&-1&2&-1&0&0&0 
\\ 0&0&-1&2&-1&0&0  
\\ 0&0&0&-1&2&-1 &-1 
\\ 0&0&0&0&-1&2 &0 
\\ 0&0&0&0&-1&0&2 \end{pmatrix}$}};

\end{tikzpicture} 

\caption[$D_6^=$]{Dynkin diagram and Cartan matrix for the simply-laced standard affine extension of $D_6$ (here denoted $D_6^=$).}
\label{figD6aff}
\end{center}\end{figure}}

\condcomment{\boolean{psfigs}}{
\begin{figure}
		\includegraphics[width=14cm]{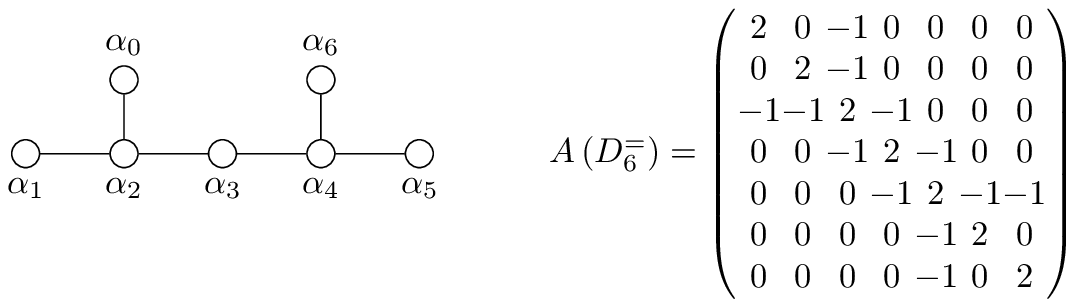}
\caption[$D_6^=$]{Dynkin diagram and Cartan matrix for the simply-laced standard affine extension of $D_6$ (here denoted $D_6^=$).}\label{figD6aff}
\end{figure}}
}

However, $D_6$ is unusual in that it also has two affine extensions with a different root length, one which we shall denote by 
$D_6^<$, because the new root is shorter than the others. In this case, the affine root is given by
\begin{equation}
-\alpha_0=\alpha_1+\alpha_2+\alpha_3+\alpha_4+\frac{1}{2}\alpha_5+\frac{1}{2}\alpha_6, \label{affroot0}
\end{equation}
There is, moreover, one with a longer root, which we denote by $D_6^>$ (both are shown in  Fig. \ref{figD6L}). Its   affine root is similarly expressible in terms of the other roots as
\begin{equation}
-\alpha_0=2\alpha_1+2\alpha_2+2\alpha_3+2\alpha_4+\alpha_5+\alpha_6.\label{affroot1}
\end{equation}

\condcomment{\boolean{includefigs}}{
\condcomment{\boolean{tikzfigs}}{
\begin{figure}
\begin{center}
	\begin{tikzpicture}[scale=0.5,
		    knoten/.style={        circle,      inner sep=.1cm,        draw}
		   ]
		  \node at (-1,.5) (knoten0) [knoten,  color=white!0!black] {};  
		  \node at  (1,.5) (knoten1) [knoten,  color=white!0!black] {};
		  \node at  (3,.5) (knoten2) [knoten,  color=white!0!black] {};
		  \node at  (5,.5) (knoten3) [knoten,  color=white!0!black] {};
		  \node at  (7,.5) (knoten4) [knoten,  color=white!0!black] {};
		  \node at  (9,.5) (knoten5) [knoten,  color=white!0!black] {};
		  \node at (7,2) (knoten6) [knoten,  color=white!0!black] {};

		\node at (-1,-0.2)  (alpha0) {$\alpha_0$};
		\node at  (1,-0.2)  (alpha1) {$\alpha_1$};
		\node at  (3,-0.2)  (alpha2) {$\alpha_2$};
		\node at  (5,-0.2)  (alpha3) {$\alpha_3$};
		\node at  (7,-0.2)  (alpha4) {$\alpha_4$};
		\node at  (9,-0.2)  (alpha5) {$\alpha_5$};
		\node at  (7,2.7)  (alpha6) {$\alpha_6$};
		\node at (0,0.35)  (midpoint) {};
		\node at (0.5,1.1)  (up) {};
		\node at (0.5,-0.1)  (down) {};

			\path (knoten0.north east) edge (knoten1.north west);
			\path (knoten0.south east) edge (knoten1.south west);
		  \path  (knoten1) edge (knoten2);
		  \path  (knoten2) edge (knoten3);
		  \path  (knoten3) edge (knoten4);
		  \path  (knoten4) edge (knoten5);
		  \path  (knoten4) edge (knoten6);
		  \path  (midpoint.mid) edge (up);
		  \path  (midpoint.mid) edge (down);

	\node at (17,0.5) (CM) 
	{\small{$A \left(D_6^<\right) = \begin{pmatrix}
	   2&-2&0&0&0&0&0  
	\\ -1&2&-1&0&0&0&0  
	\\ 0&-1&2&-1&0&0&0 
	\\ 0&0&-1&2&-1&0&0  
	\\ 0&0&0&-1&2&-1 &-1 
	\\ 0&0&0&0&-1&2 &0 
	\\ 0&0&0&0&-1&0&2 \end{pmatrix}$}};

	\end{tikzpicture} \\

	\begin{tikzpicture}[scale=0.5,
	    knoten/.style={        circle,      inner sep=.1cm,        draw}
	   ]
		  \node at (-1,.5) (knoten0) [knoten,  color=white!0!black] {};  
		  \node at  (1,.5) (knoten1) [knoten,  color=white!0!black] {};
		  \node at  (3,.5) (knoten2) [knoten,  color=white!0!black] {};
		  \node at  (5,.5) (knoten3) [knoten,  color=white!0!black] {};
		  \node at  (7,.5) (knoten4) [knoten,  color=white!0!black] {};
		  \node at  (9,.5) (knoten5) [knoten,  color=white!0!black] {};
		  \node at (7,2) (knoten6) [knoten,  color=white!0!black] {};

		\node at (-1,-0.2)  (alpha0) {$\alpha_0$};
		\node at  (1,-0.2)  (alpha1) {$\alpha_1$};
		\node at  (3,-0.2)  (alpha2) {$\alpha_2$};
		\node at  (5,-0.2)  (alpha3) {$\alpha_3$};
		\node at  (7,-0.2)  (alpha4) {$\alpha_4$};
		\node at  (9,-0.2)  (alpha5) {$\alpha_5$};
		\node at  (7,2.7)   (alpha6) {$\alpha_6$};
				\node at (0,0.35)  (midpoint) {};
				\node at (-0.5,1.1)  (up) {};
				\node at (-0.5,-0.1)  (down) {};
		\path (knoten0.north east) edge (knoten1.north west);
		\path (knoten0.south east) edge (knoten1.south west);
		  \path  (knoten1) edge (knoten2);
		  \path  (knoten2) edge (knoten3);
		  \path  (knoten3) edge (knoten4);
		  \path  (knoten4) edge (knoten5);
		  \path  (knoten4) edge (knoten6);
				  \path  (midpoint.mid) edge (up);
				  \path  (midpoint.mid) edge (down);

	\node at (17,0.5) (CM) 
	{\small{$A \left(D_6^>\right) = \begin{pmatrix}
	   2&-1&0&0&0&0&0  
	\\ -2&2&-1&0&0&0&0  
	\\ 0&-1&2&-1&0&0&0 
	\\ 0&0&-1&2&-1&0&0  
	\\ 0&0&0&-1&2&-1 &-1 
	\\ 0&0&0&0&-1&2 &0 
	\\ 0&0&0&0&-1&0&2 \end{pmatrix}$}};

	\end{tikzpicture}

\caption[$D_6^<$]{Dynkin diagrams and Cartan matrices for the standard affine extensions of $D_6$ with a short affine root (here denoted $D_6^<$), and that with a long affine root, here denoted $D_6^>$. The arrow conventionally points to the shorter root.} 
\label{figD6L}
\end{center}
\end{figure}}
\condcomment{\boolean{psfigs}}{
\begin{figure}
	\begin{center}
	\includegraphics[width=16cm]{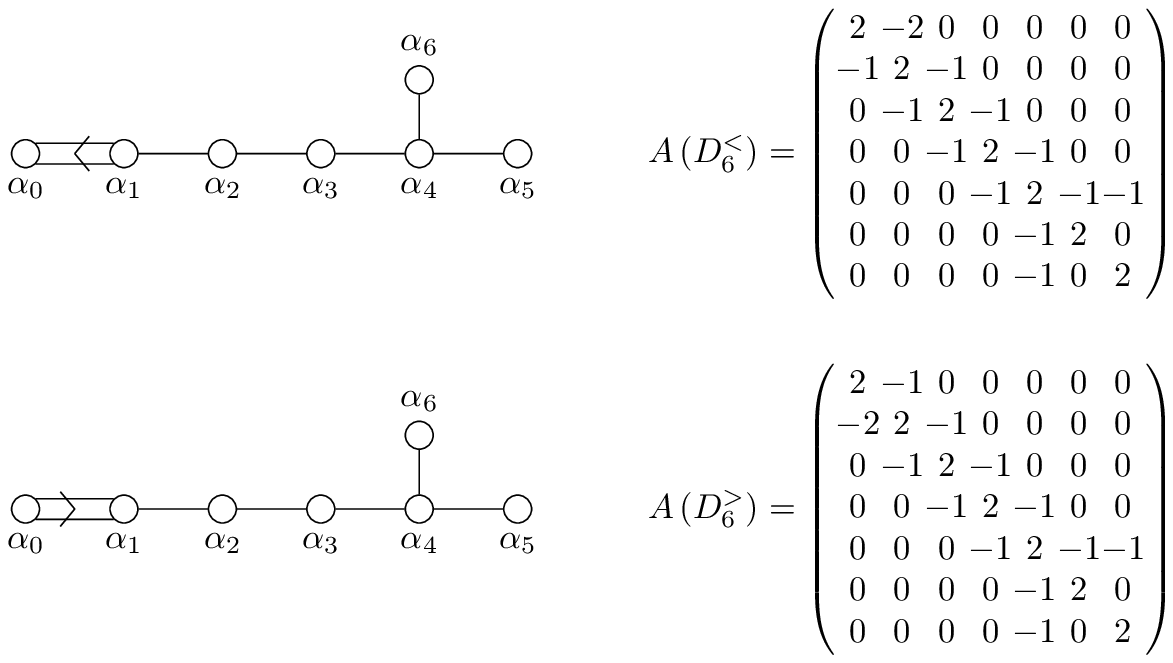}

\caption[$D_6^<$]{Dynkin diagrams and Cartan matrices for the standard affine extensions of $D_6$ with a short affine root (here denoted $D_6^<$), and that with a long affine root, here denoted $D_6^>$. The arrow conventionally points to the shorter root.} 
\label{figD6L}
\end{center}
\end{figure}}
}

\condcomment{\boolean{includefigs}}{
\condcomment{\boolean{tikzfigs}}{
\begin{figure}
\begin{center}
	\begin{tikzpicture}[scale=0.5,
		    knoten/.style={        circle,      inner sep=.1cm,        draw}
		   ]
		  \node at (4,2.0) (knoten0) [knoten,  color=white!0!black] {};
		  \node at (1,0.5) (knoten1) [knoten,  color=white!0!black] {};
		  \node at (3,0.5) (knoten2) [knoten,  color=white!0!black] {};
		  \node at (5,0.5) (knoten3) [knoten,  color=white!0!black] {};
		  \node at (7,0.5) (knoten4) [knoten,  color=white!0!black] {};

		\node at (4,2.7)  (alpha0) {$\alpha_0$};
		\node at (1,-0.2)  (alpha1) {$\alpha_1$};
		\node at (3,-0.2)  (alpha2) {$\alpha_2$};
		\node at (5,-0.2)  (alpha3) {$\alpha_3$};
		\node at (7,-0.2)  (alpha4) {$\alpha_4$};

		  \path  (knoten0) edge (knoten1);
		  \path  (knoten0) edge (knoten4);
		  \path  (knoten1) edge (knoten2);
		  \path  (knoten2) edge (knoten3);
		  \path  (knoten3) edge (knoten4);

	\node at (17,0.5) (CM) 
	{\small{$A \left(A_4^=\right) = \begin{pmatrix}
	    2&-1&0&0&-1
	\\ -1&2&-1&0&0  
	\\  0&-1&2&-1&0 
	\\  0&0&-1&2&-1
	\\ -1&0&0&-1&2
	 \end{pmatrix}$}};

	\end{tikzpicture}

\caption[$A_4^=$]{Dynkin diagram and Cartan matrix for the simply-laced standard affine extension of $A_4$, here denoted $A_4^=$.}
\label{figA4aff}
\end{center}
\end{figure}}

\condcomment{\boolean{psfigs}}{
\begin{figure}
\begin{center}

			\includegraphics[width=16cm]{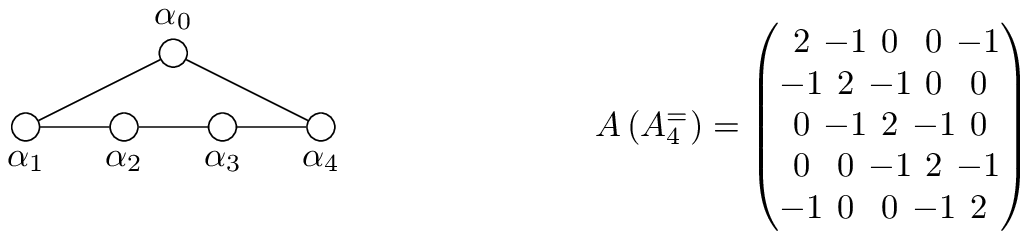}

\caption[$A_4^=$]{Dynkin diagram and Cartan matrix for the simply-laced standard affine extension of $A_4$, here denoted $A_4^=$.}
\label{figA4aff}
\end{center}
\end{figure}}
}

$A_4$ also has a unique standard affine extension, which is simply-laced and hence  will be denoted by
$A_4^=$ (see Fig. \ref{figA4aff}).
The affine root is given by
\begin{equation}
-\alpha_0=\alpha_1+\alpha_2+\alpha_3+\alpha_4.\label{affroota4}
\end{equation}

\section{Affine extensions of non-crystallographic root systems induced by projection}\label{sec_single}

In this section, we present a novel construction of affine extensions of non-crystallographic Coxeter groups, as illustrated in Fig. \ref{EDAH_overview} and indicated by the dashed arrow there. 
We induce affine extensions in the lower-dimensional, non-crystallographic case by applying the projection formalism from Section \ref{sec_proj} to the  five affine extensions from Figs \ref{figE8aff}-\ref{figA4aff} in Section \ref{sec_std_single}. We show that the induced extensions are invariant under the Dynkin diagram automorphisms of the crystallographic groups and their affine extensions  (Section \ref{sec_autos}), and that in a wider class of further extensions (simply-laced, double extensions with non-trivial projection kernel), none are compatible with the projection formalism (Section \ref{sec_further}). 

\subsection{Projecting the affine root}\label{sec_proj_aff}

In the previous section, we have introduced the projection formalism, and we have presented the standard affine extensions of the relevant crystallographic groups. In particular, in each case we have given expressions for the affine roots in  terms of the root vectors of the unextended group. By the linearity of the projection, one can compute the projection of the affine root. In analogy to the fact that the other roots project  to generators of the groups $H_i$ ($i=2,3,4$), we treat the projected affine root as an additional, affine,  root for the projected group $H_i$, thereby inducing an affine extension of $H_i$.

\begin{defn}[Induced affine root] For a pair of Coxeter groups $(G^U, G^D)$ related via projection, i.e. a non-degenerate mapping $\pi$ of the root system of  $G^U$ onto the root system of $G^D$, we call the projection of the affine root of an affine extension of $G^U$ the \emph{induced (affine) root} of $G^D$. The matrix defined as in Eq. (\ref{CM}) by the induced affine root and the simple roots of $G^D$ define the \emph{induced (affine) Cartan matrix}.
\end{defn}

\begin{thm}[Induced Extensions] The five affine extensions $A_4^=$, $D_6^>$, $D_6^=$, $D_6^<$ and $E_8^=$ of  $A_4$, $D_6$ and $E_8$  induce affine extensions of $H_2$, $H_3$ and $H_4$ via the projections linking the respective root systems. For $\pi_\parallel$, these five induced extensions shall be denoted by $H_2^=$, $H_3^>$, $H_3^=$, $H_3^<$ and $H_4^=$. Projection with $\pi_\perp$ into the other invariant subspace yields affine roots that are the Galois conjugates of those induced by $\pi_\parallel$, and the five corresponding induced affine extensions shall be denoted by $\bar{H}_2^=$, $\bar{H}_3^>$, $\bar{H}_3^=$, $\bar{H}_3^<$ and $\bar{H}_4^=$.
\end{thm}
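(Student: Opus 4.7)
My plan is to prove the theorem by direct computation, exploiting the linearity of the projections $\pi_\parallel$ and $\pi_\perp$ introduced in Section~\ref{sec_proj}. The key observation is that the affine root $\alpha_0$ of each of the five crystallographic extensions is given explicitly as a $\mathbb{Z}$-linear combination of the simple roots $\alpha_i$ (Eqs.~(\ref{affroote8}), (\ref{affroot2}), (\ref{affroot0}), (\ref{affroot1}), and (\ref{affroota4})). By linearity, its image under $\pi_\parallel$ can therefore be computed term by term, using the fact that $\pi_\parallel$ sends each $\alpha_i$ either to a simple root $a_j$ of the corresponding non-crystallographic group or to its $\tau$-multiple $\tau a_j$, as encoded in the matrix in Eq.~(\ref{projpi}). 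Doing this for the five cases produces an explicit $\mathbb{Z}[\tau]$-linear expression for the induced affine root in terms of the simple roots of $H_2$, $H_3$, or $H_4$.

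Next, from this explicit expression and the definition of the Cartan matrix Eq.~(\ref{CM}), I would compute the entries of the induced affine Cartan matrix and verify the four conditions in Definition~\ref{defKMtype}: normalisation $A_{00}=2$ (automatic from the definition applied to any non-zero root), entries in $\mathbb{Z}[\tau]$ (immediate since the induced root is a $\mathbb{Z}[\tau]$-combination of simple roots and inner products of simple roots of $H_i$ lie in $\mathbb{Z}[\tau]$), non-positivity of off-diagonal entries, and the symmetry-of-zeros condition $A^{aff}_{ij}=0 \Leftrightarrow A^{aff}_{ji}=0$. The latter two conditions must be checked case by case but follow because the affine root is extremal in a suitable sense: its orbit under the finite Coxeter group consists of the negatives of positive roots, so its inner product with each simple root is non-positive.

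The most delicate step is the determinant constraint $\det A^{aff}=0$. I would establish this by noting that the induced affine root lies in the $\mathbb{Q}(\tau)$-span of the simple roots $a_1,\ldots,a_n$ of the non-crystallographic group (because projection commutes with taking linear combinations), so the new row of the extended Cartan matrix is proportional to a $\mathbb{Z}[\tau]$-combination of the other rows, forcing rank deficiency. This mirrors the crystallographic situation: the degeneracy of $A^{aff}(E_8^=)$ etc.\ is a direct consequence of $\alpha_0$ being a linear combination of the simple roots $\alpha_i$, and this property is preserved by projection.

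Finally, for the $\pi_\perp$ statement, I would invoke the discussion in Section~\ref{sec_proj}: the two $H_i$-invariant subspaces in the crystallographic ambient space are related by the Galois automorphism $\tau\leftrightarrow\sigma$, and accordingly $\pi_\perp$ equals $\pi_\parallel$ composed with Galois conjugation on the target. Consequently, the image of each affine root under $\pi_\perp$, expressed in the basis $\bar{a}_i$, is the Galois conjugate of its image under $\pi_\parallel$ expressed in the basis $a_i$. All verifications of the Kac-Moody-type conditions then transfer automatically, and the barred extensions $\bar{H}_2^=,\ldots,\bar{H}_4^=$ are well-defined. The main obstacle I anticipate is purely bookkeeping: keeping the numbering convention of Figs.~\ref{figE8}--\ref{figD6} consistent when translating the crystallographic coefficients in Eqs.~(\ref{affroote8})--(\ref{affroota4}) into the non-crystallographic basis, and correctly collecting the $\tau$-multiples so that the resulting induced Cartan matrices match those that Section~\ref{sec_class} will compare against the classification of~\cite{DechantTwarockBoehm2011H3aff}.
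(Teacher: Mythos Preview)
Your overall plan---project each affine root term by term using linearity of $\pi_\parallel$, compute the resulting induced Cartan matrix from Eq.~(\ref{CM}), and then obtain the $\pi_\perp$ versions by Galois conjugation---is exactly what the paper does: its proof is a straightforward case-by-case computation of the five induced Cartan matrices, followed by the observation that $\pi_\perp$ yields the Galois conjugates. Your determinant argument (linear dependence of $\{a_0,a_1,\dots,a_n\}$ forces singularity of the Gram matrix and hence of the Cartan matrix) is a clean way to see $\det A^{aff}=0$ without explicit calculation, and is a minor improvement over the paper, which simply exhibits the matrices.

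However, one step in your plan is wrong as stated. You claim that the induced Cartan matrix entries lie in $\mathbb{Z}[\tau]$ ``immediately'' because the induced root is a $\mathbb{Z}[\tau]$-combination of simple roots. This fails for two of the five cases. First, for $D_6^<$ the crystallographic affine root in Eq.~(\ref{affroot0}) already has half-integer coefficients, so its projection is not a $\mathbb{Z}[\tau]$-combination of the $a_i$. Second, and more fundamentally, the entry $A_{0j}=2(a_0\vert a_j)/(a_0\vert a_0)$ involves division by $(a_0\vert a_0)$, which need not be a unit in $\mathbb{Z}[\tau]$. Indeed, the paper's computation for $H_3^<$ and $H_3^>$ produces entries $\frac{4}{5}(\tau-3)$ and $\frac{2}{5}(\tau-3)$, which lie in $\mathbb{Q}[\tau]$ but not in $\mathbb{Z}[\tau]$. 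The paper does not attempt to verify the integrality condition of Definition~\ref{defKMtype} for these two cases; rather, it uses them to motivate relaxing the ring to $\mathbb{Q}[\tau]$. So you should drop the claim that all five induced matrices satisfy Definition~\ref{defKMtype} verbatim: three do, two require the extended number field, and the theorem as the paper interprets it is only asserting that well-defined induced Cartan matrices (in the sense of the definition immediately preceding the theorem) arise in each case.
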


\begin{proof} We consider the five cases in turn. 
\begin{enumerate}
\item We  begin with the case of $E_8$. 
We have shown above that the root vectors can be projected onto the $H_4$ root vectors $a_i$ by the projection $\pi_\parallel$ shown in Fig. \ref{figE8}. 
The projection in Eq. (\ref{projpi}) of the affine root in Eq. (\ref{affroote8}) is therefore
\begin{equation}
-a_0=\pi_\parallel(-\alpha_0)= 2(1+\tau)a_1+(3+4\tau)a_2+2(2+3\tau)a_3+(3+5\tau)a_4.\label{E8projaff}
\end{equation}
Using the inner products from the $H_4$ Cartan matrix   $(a_1|a_2)=-\frac{1}{2}$,  $(a_2|a_3)=-\frac{1}{2}$ and $(a_3|a_4)=-\frac{\tau}{2}$, the inner products of the additional root with the 
roots of $H_4$ are $(a_0|a_1)=-\frac{1}{2}$ and  $(a_0|a_2)= (a_0|a_3)= (a_0|a_4)=0$. 
 Thus, the  Cartan matrix corresponding to the simple roots of  $H_4$ extended by the projected  affine root of $E_8^=$ is found to be
\begin{equation}
A\left(H_4^=\right) := \begin{pmatrix} 2&\tau-2&0&0&0 \\ -1&2&-1&0&0 \\ 0&-1&2&-1&0 \\ 0&0&-1&2&-\tau \\ 0&0&0&-\tau&2 \end{pmatrix}.\label{CarH4affAS}
\end{equation}
This is  one of the Kac-Moody-type affine extensions of $H_4$  that we derived in our previous paper \cite{DechantTwarockBoehm2011H3aff},  in the context of non-crystallographic Coxeter groups. It was listed there as the first non-trivial example of affine extensions of this type and corresponds to an affine extension  of length $\tau$ along the highest root $\alpha_H$ of $H_4$. We will briefly review the results from  \cite{DechantTwarockBoehm2011H3aff} in Section \ref{sec_sum}, which we will use to classify all induced affine extensions in Section \ref{sec_class}. 

Projecting with $\pi_\perp$ into the other $H_4$-invariant subspace spanned by the basis of simple roots $\bar{a}_i$  yields the Galois conjugate of the affine root in Eq. (\ref{E8projaff})
\begin{equation}
-\bar{a}_0=\pi_\perp(-\alpha_0)= 2(1+\sigma)\bar{a}_1+(3+4\sigma)\bar{a}_2+2(2+3\sigma)\bar{a}_3+(3+5\sigma)\bar{a}_4.\label{E8projaff2}
\end{equation}
Using the inner products   $(\bar{a}_1|\bar{a}_2)=-\frac{1}{2}$,  $(\bar{a}_2|\bar{a}_3)=-\frac{1}{2}$ and $(\bar{a}_3|\bar{a}_4)=-\frac{\sigma}{2}$, the inner products of the affine root with the 
 $H_4$ roots are $(\bar{a}_0|\bar{a}_1)=-\frac{1}{2}$ and  $(\bar{a}_0|\bar{a}_2)= (\bar{a}_0|\bar{a}_3)= (\bar{a}_0|\bar{a}_4)=0$, and the resulting Cartan matrix $A\left(\bar{H}_4^=\right)$  is thus the Galois conjugate of that in Eq. (\ref{CarH4affAS}) from the other invariant subspace. Since both sets of roots $a_i$ and $\bar{a}_i$ generate the same abstract group $H_4$, one  has a pair of Galois conjugate induced affine roots $\alpha_0$ and $\bar{\alpha}_0$ parallel to the highest root $\alpha_H$ with Galois conjugate lengths $\tau$ and $\sigma$, respectively. Note that $\left(A\left(H_4^=\right) \right)^T$ would also generate the same translation of length $\sigma$ along $\alpha_H$, and was contained in the results of  \cite{DechantTwarockBoehm2011H3aff}. We will consider whether $\left(A\left(H_4^=\right) \right)^T$ could also arise from projection in Section \ref{sec_concl}. 

\item
Using the same procedure as above -- i.e.  employing linearity to project the affine root of $D_6^=$ and using it as an affine extension of $H_3$ -- generates the  analogue of the previous case in three dimensions
\begin{equation}
A\left(H_3^=\right) :=  \begin{pmatrix} 2&0&\tau-2 &0 \\ 0&2&-1&0 \\ -1&-1&2&-\tau\\ 0&0&-\tau&2 \end{pmatrix}\label{CarH3affAS2}.
\end{equation}
This is also the first non-trivial example of asymmetric affine extensions of $H_3$ considered in our previous paper  \cite{DechantTwarockBoehm2011H3aff}, corresponding to an affine extension of length $\tau$ along the highest root $\alpha_H$ of $H_3$ (i.e. along a 2-fold axis of icosahedral symmetry). One  choice of simple roots for $H_3$ is $\alpha_1=(0,1,0)$, $\alpha_2=-\frac{1}{2}(-\sigma,1,\tau)$, and $\alpha_3=(0,0,1)$, for which  $\alpha_H=(1,0,0)$.  

Projection into the other invariant subspace likewise generates the Galois conjugate affine root $\bar{\alpha}_0$ and the Galois conjugate Cartan matrix  $A\left(\bar{H}_3^=\right)$,  thereby giving rise to a translation of length $\sigma$ along  $\alpha_H$.

\item
When projecting $D_6^<$ we find
\begin{equation}
A\left(H_3^<\right) := \begin{pmatrix} 2&\frac{4}{5}(\tau-3) &0 &0 \\ -1&2&-1&0 \\ 0&-1&2&-\tau\\ 0&0&-\tau&2 \end{pmatrix}.\label{CarH3affAS0}
\end{equation}
In  \cite{DechantTwarockBoehm2011H3aff}, we have considered a family of matrices of this form analytically and found a similar classification as in the other cases, according to a certain Fibonacci scaling relation  (c.f. Section \ref{sec_sum}). Note,  that the projection construction here naturally leads to $\mathbb{Q}[\tau]$-valued    entries of the Cartan matrix, suggesting to analyse this more general class of Cartan matrices over the extended number field  $\mathbb{Q}[\tau]=\lbrace a+\tau b| a,b \in \mathbb{Q}\rbrace$.  Cartan matrices of this form correspond to affine extensions along a 5-fold axis of icosahedral symmetry $T_5$, where $T_5=(\tau, -1, 0)$ in our chosen basis of simple roots, and the normalisation is chosen for later convenience. The affine root of $H_3^<$ is then given by $\alpha_0=\frac{1}{2}T_5$. 

Projection into the other invariant subspace again yields the Galois conjugate $\bar{\alpha}_0$ of $\alpha_0$,  corresponding to $\bar{\alpha}_0=-\frac{1}{2}\sigma T_5$ for our normalisation of $T_5$.

\item
A similar result is obtained when $\pi_\parallel$-projecting $D_6^>$ to $H_3^>$
\begin{equation}
A\left(H_3^>\right) :=\begin{pmatrix} 2&\frac{2}{5}(\tau-3)&0 &0 \\ -2&2&-1&0 \\ 0&-1&2&-\tau\\ 0&0&-\tau&2 \end{pmatrix}.\label{CarH3affAS1}
\end{equation}
The respective projections again yield the Galois conjugate pair $\alpha_0=T_5$ and $\bar{\alpha}_0=-\sigma T_5$.
We note that even though  the affine extensions $D_6^<$ and $D_6^>$ are related by transposition, the correspondence between the two induced affine extensions  $H_3^<$ and $H_3^>$ (and $\bar{H}_3^<$ and $\bar{H}_3^>$) is not so straightforward, i.e. the operations of transposition and projection do not commute, schematically $[T,P]\ne 0$. 
However, the transposed versions of these induced lower-dimensional Cartan matrices, for instance $A\left(H_3^>\right)^T$, were among the affine extensions derived in \cite{DechantTwarockBoehm2011H3aff}. 
One might therefore wonder  which higher-dimensional Cartan matrices could give rise to these transposed versions after projection. We will revisit these issues later.

\item
The affine root of $A_4$ is given by Eq. (\ref{affroota4})
and upon projection with $\pi_\parallel$ yields an affine extension of $H_2$ analogous to the other simply-laced cases considered above
\begin{equation}
A\left(H_2^=\right) := \begin{pmatrix} 2&\tau-2&\tau-2 \\ -1&2&-\tau\\ -1&-\tau&2 \end{pmatrix}.\label{CarH2affAS}
\end{equation}
This likewise corresponds to an affine root of length $\tau$ along $\alpha_H$, the highest root of $H_2$ given by $\alpha_H=\tau(\alpha_1+\alpha_2)$, which was also found in \cite{DechantTwarockBoehm2011H3aff}, where we also visualised its action on a pentagon. 
Projection with $\pi_\perp$ yields an induced extension $\bar{H}_2^=$ with the Galois-conjugate length $-\sigma$ along $\alpha_H$. 
\end{enumerate}

This completes the proof.
\end{proof}

As is well known \cite{Humphreys1990Coxeter}, the affine extensions of crystallographic Coxeter groups result in a (periodic) tessellation of the fundamental domain of the unextended group in terms of copies of the  fundamental domain of the affine group.
In contrast, affine extended non-crystallographic groups  inherit the full fundamental domain of the unextended group. 
The fundamental domain of these extensions however still has the interesting property of being 
tessellated, but in this case  the  tiling is \emph{aperiodic}, and hence the fundamental domain again has a non-trivial mathematical structure. 

In order to further explore this interesting relation with quasilattices, we begin by introducing some terminology \cite{Twarock:2002a}. We recall that a generic affine non-crystallographic Coxeter group $H_i^+$ is generated by the $s_j$s from Section \ref{sec_Cox} together with the translation $T$ that we identified in Definition \ref{defaffcox}. 

\begin{defn}[Quasicrystal fragment] \label{defQCF}
	Let $\Phi$ denote the root polytope of the non-crystallographic Coxeter group $H_i$, and let $W^m(s_j;T)$ denote the set of all \emph{words} $w(s_j;T)$ in the \emph{alphabet} formed from the \emph{letters} $s_j$ and $T$  in which $T$ appears precisely $m$ times. The set of points 
	\begin{equation} Q_i (n) := \lbrace W^m 	(s_j;T )\Phi 	| m \le n\rbrace \end{equation}
is called an \emph{$H_i^+$-induced quasicrystal fragment}; $n$ is the \emph{cut-off-parameter}. 
The \emph{cardinality} of such a quasicrystal fragment will be denoted by $|Q_i (n)|$, and a generic translation yields the maximal cardinality $|Q_i^{max} (n)|$. We will say that a quasicrystal fragment with less than maximal cardinality   \emph{has coinciding/degenerate points/vertices} and we call the corresponding translation \emph{distinguished}.  This degeneracy implies \emph{non-trivial relations} $w_1(s_j;T)v=w_2(s_j;T)v$ (for $v \in \Phi$) amongst the (words in the) generators. The set of points  $P_i (n) := \lbrace W^m 	(s_j;T )R 	| m = n\rbrace$ will denote the \emph{shell} of the quasicrystal fragment determined by the words that contain $T$  precisely $n$ times.
\end{defn}

\begin{table}
\begin{centering}\begin{tabular}{|c||c|c|c|}
\hline
$\lambda$&$H_2$&$H_3$&$H_4$
\tabularnewline
\hline
\hline
$0$&$10$&$30$&$120$
\tabularnewline
\hline
\hline
$\sigma$&$40$&$552$&$5280$
\tabularnewline
\hline
$1$&$36$&$361$&$3721$
\tabularnewline
\hline
$\tau$&$40$&$552$&$5280$
\tabularnewline
\hline
\end{tabular}\par\end{centering}
\caption[E8FibClass]{\label{tabFibCard} Cardinalities of extended root systems/quasicrystal fragments depending on translation length. 
Here, we list cardinalities $|P_i (1)|$ of the point sets achieved by extending the $H_i$ root systems by an affine reflection along the highest root  $\alpha_0=-\lambda \alpha_H$ for various values of $\lambda$.  $\lambda=1$ corresponds to the simply-laced affine extension  $H_i^{aff}$ that was considered in  \cite{Twarock:2002a}. The induced extensions $H_i^{=}$ and $\bar{H}_i^=$ considered here correspond to $\lambda=\tau$ and   $\lambda=-\sigma$, respectively, and yield the same cardinality.    $\left(H_i^{aff}\right)^T$ is also an affine extension corresponding to  $\lambda=-\sigma$, and is contained in the solutions found in \cite{DechantTwarockBoehm2011H3aff}.  We will  discuss how this could be lifted to the higher-dimensional case in Section \ref{sec_concl}. We note that all three translations in each case are distinguished, i.e. they give rise to less than maximal cardinality. }
\end{table}
%

The  affine roots relevant here are all parallel to  the respective highest root $\alpha_H$   but have various different lengths $\lambda$, which we write  as $\alpha_0=-\lambda \alpha_H$. 
 Therefore, in Table \ref{tabFibCard} we present the cardinalities $|P_i (1)|$ of point arrays   derived from the root systems of $H_2$, $H_3$ and $H_4$ (the decagon, the icosidodecahedron and the 600-cell, respectively) for translation lengths $\lambda=\lbrace 0,\sigma,1,\tau\rbrace$. 
$\lambda=0$ corresponds to the unextended group, and the induced affine extensions $H_i^=$ from Eqs (\ref{CarH2affAS}), (\ref{CarH3affAS2}) and (\ref{CarH4affAS}) considered here correspond to $\lambda=\tau$.
The simply-laced extensions $H_i^{aff}$ considered in  \cite{Twarock:2002a} have $\lambda=1$. 
The transposes of $A(H_i^=)$ are affine extensions with length $\lambda=-\sigma$ that were also amongst those found in \cite{DechantTwarockBoehm2011H3aff}. They are also equivalent to the induced affine extensions from the other invariant subspace, $\bar{H}^=_i$, as the compact part of the group is the same and they give rise to the same translations $\lambda=-\sigma$.  This forms a subset of the extensions found in  \cite{DechantTwarockBoehm2011H3aff} that is distinguished via the projection and also through its symmetric place in the Fibonacci classification in  \cite{DechantTwarockBoehm2011H3aff}, which we will discuss further in Sections \ref{sec_sum} and \ref{sec_class}. 
All three translations belonging to the special three cases of $H_i^{aff}$ and the induced $H_i^=$ and $\bar{H}_i^=$ are found to be \emph{distinguished}. 
We also note that the Galois-conjugate translations yield the same cardinalities, i.e. the rows corresponding to $\sigma$ and $\tau$ have identical entries. Investigating the corresponding cases for $H_3^<$, one finds that the Galois conjugate affine roots $\alpha_0=\frac{1}{2}T_5$ and  $\bar{\alpha}_0=-\frac{1}{2}\sigma T_5$ yield the same cardinality of $212$. For $H_3^>$, the conjugate pair $\alpha_0=T_5$ and  $\bar{\alpha}_0=-\sigma T_5$ has the same cardinality $330$. Note that these are all distinguished translations. The cardinalities are lower than for $H_3^{aff}$, $H_3^{=}$ and $\bar{H}_3^{=}$ where  $\alpha_H=T_2=(1,0,0)$, since there are thirty $2$-fold but only twelve $5$-fold axes of icosahedral symmetry.

\begin{figure}
	\begin{center}
	      \begin{tabular}{@{}c@{ }c@{ }c@{ }}
				\begin{tikzpicture}
				\node (img) [inner sep=0pt,above right]
				{\includegraphics[width=3.5cm]{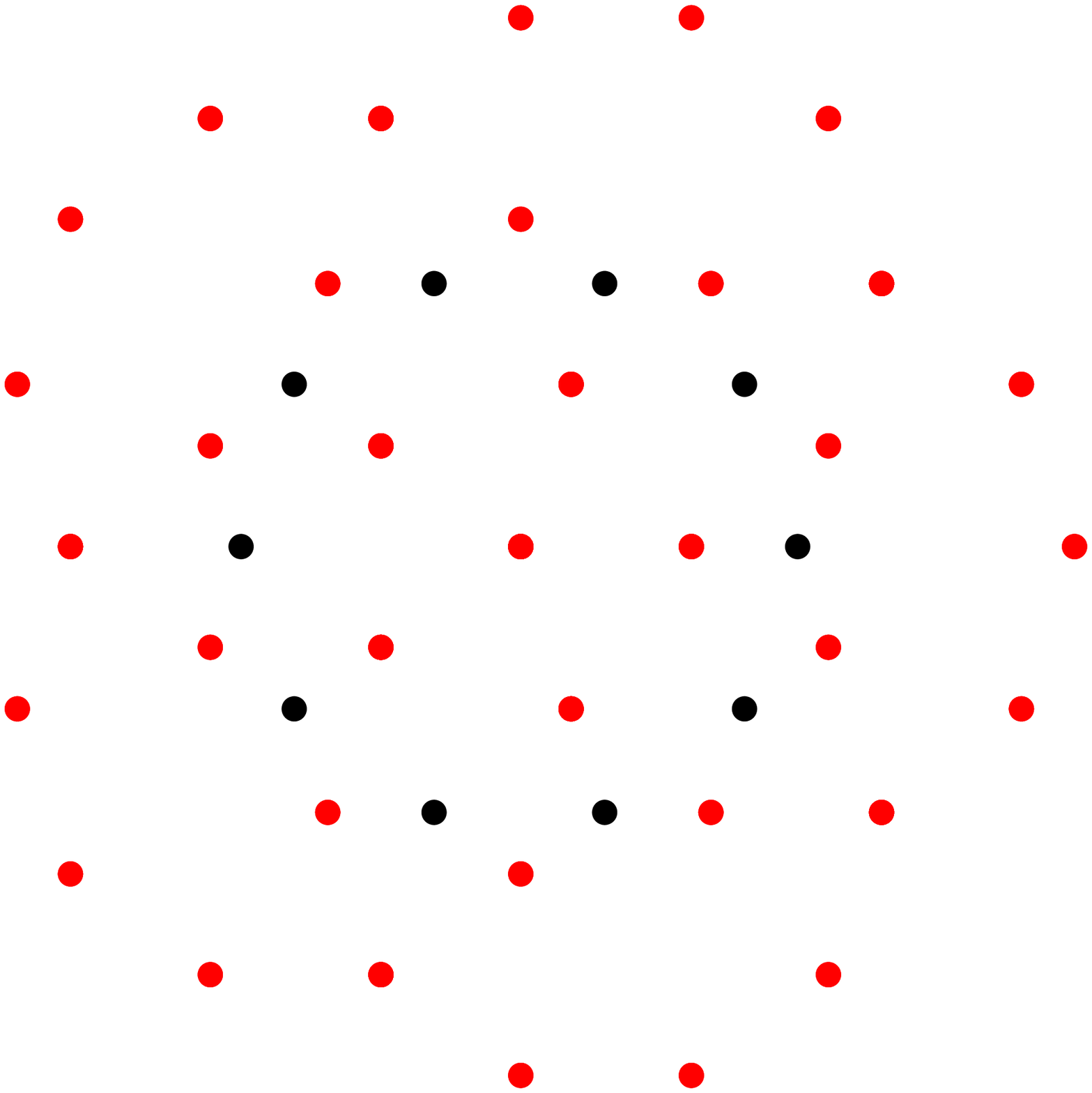}};
				\end{tikzpicture}&\hspace{1.5cm}
				\begin{tikzpicture}
				\node (img) [inner sep=0pt,above right]
				{\includegraphics[width=3.5cm]{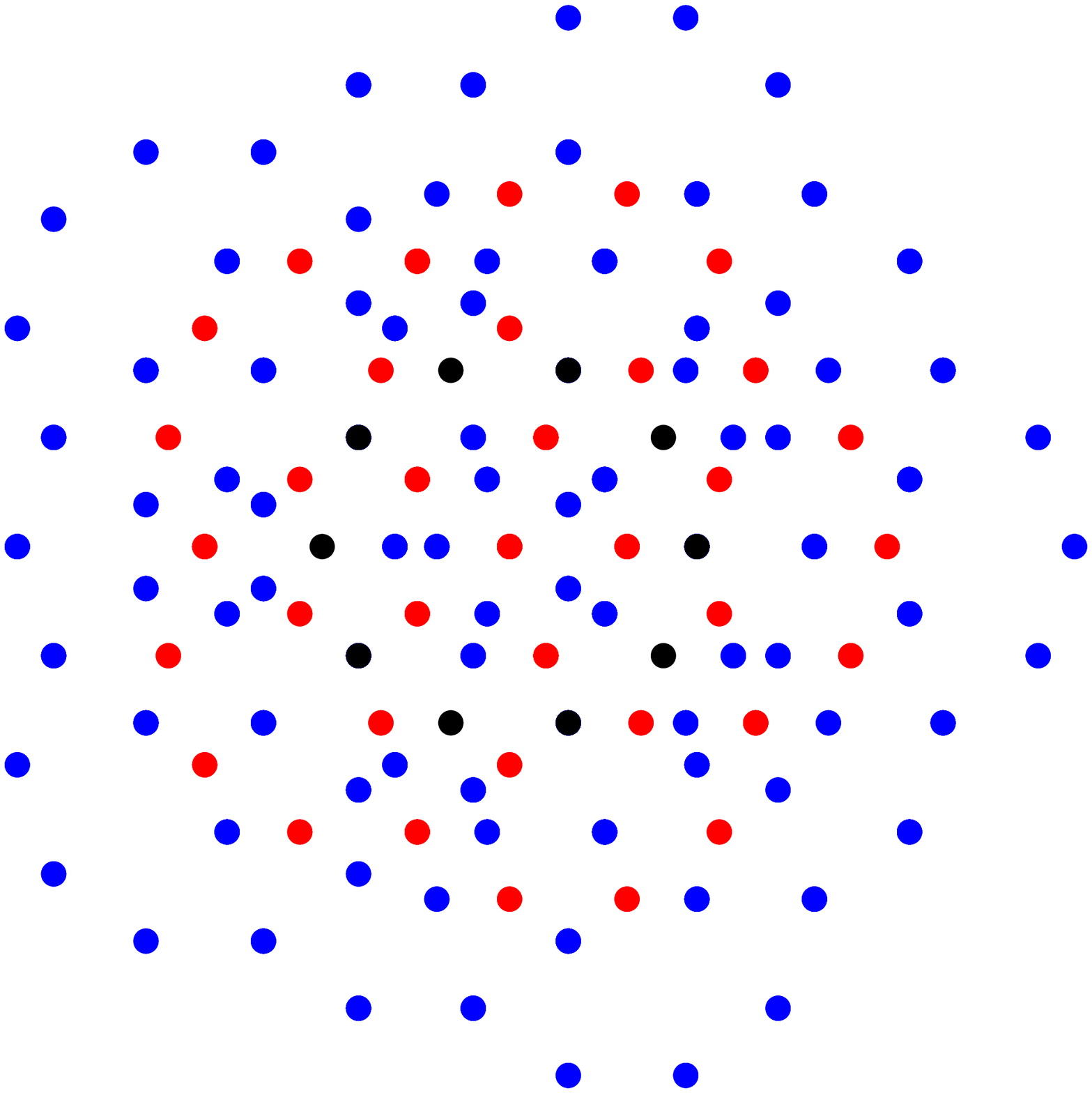}};
				\end{tikzpicture} &\hspace{1.5cm}

				\begin{tikzpicture}
				\node (img) [inner sep=0pt,above right]
				{\includegraphics[width=3.5cm]{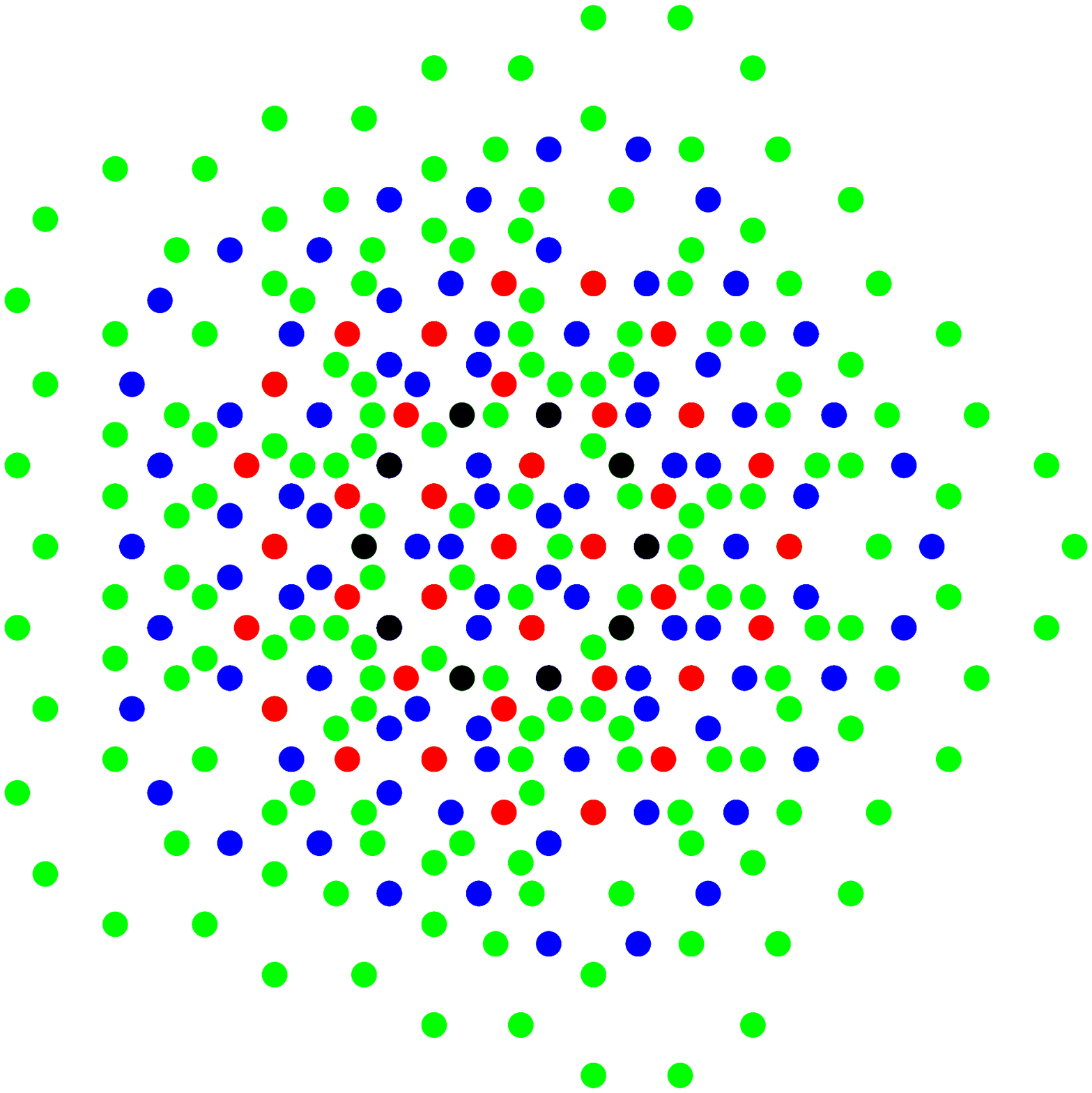}};
				\end{tikzpicture}
		\vspace{0.25cm}
			\\
			(a) $|P_2 (1)|=36$&(b) $|P_2 (2)|=90$ &(c) $|P_2 (3)|=185$\\
	 	\vspace{0.25cm}

				\begin{tikzpicture}
				\node (img) [inner sep=0pt,above right]
				{\includegraphics[width=3.5cm]{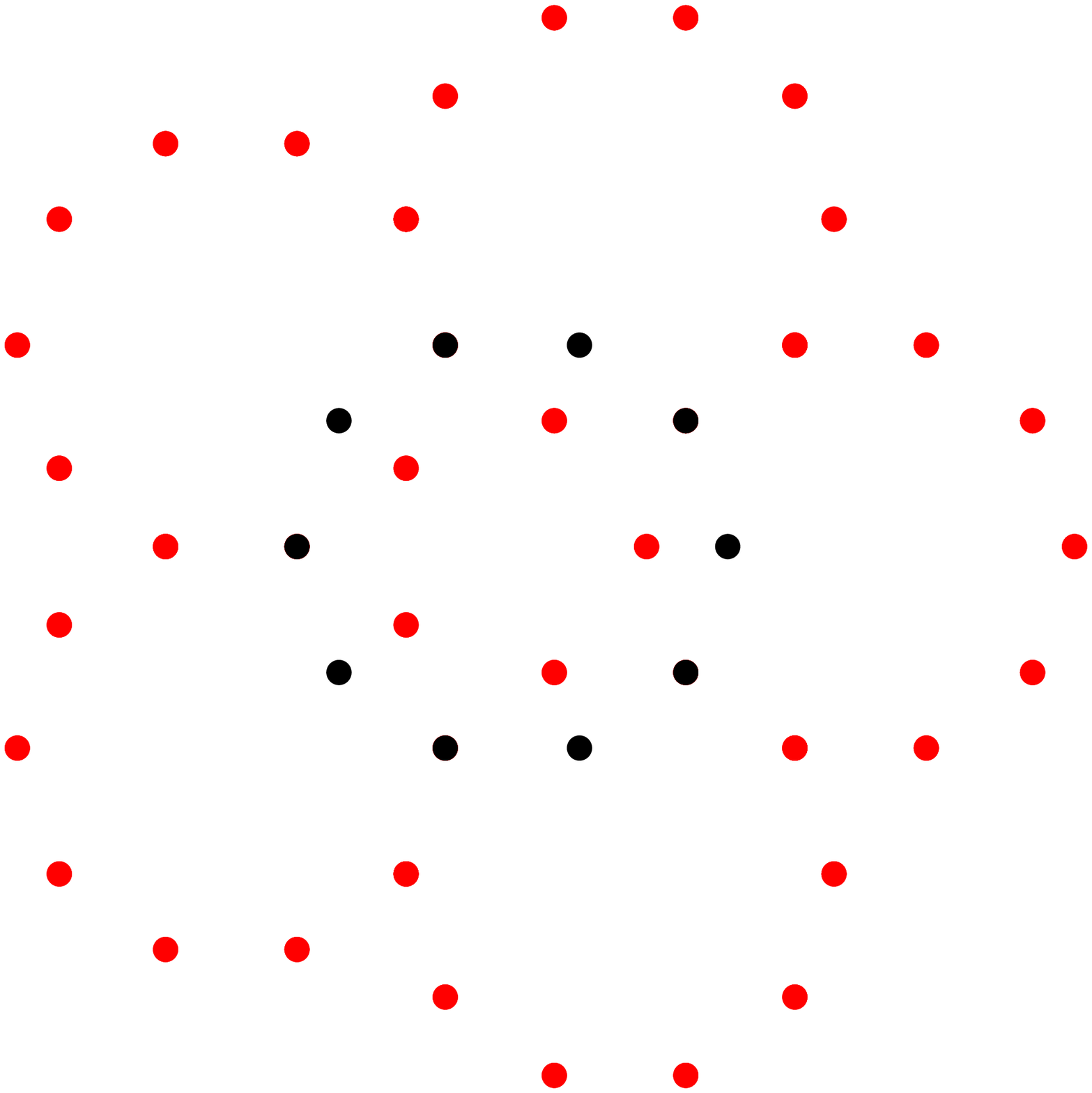}};
				\end{tikzpicture}&\hspace{1.5cm}
				\begin{tikzpicture}
				\node (img) [inner sep=0pt,above right]
				{\includegraphics[width=3.5cm]{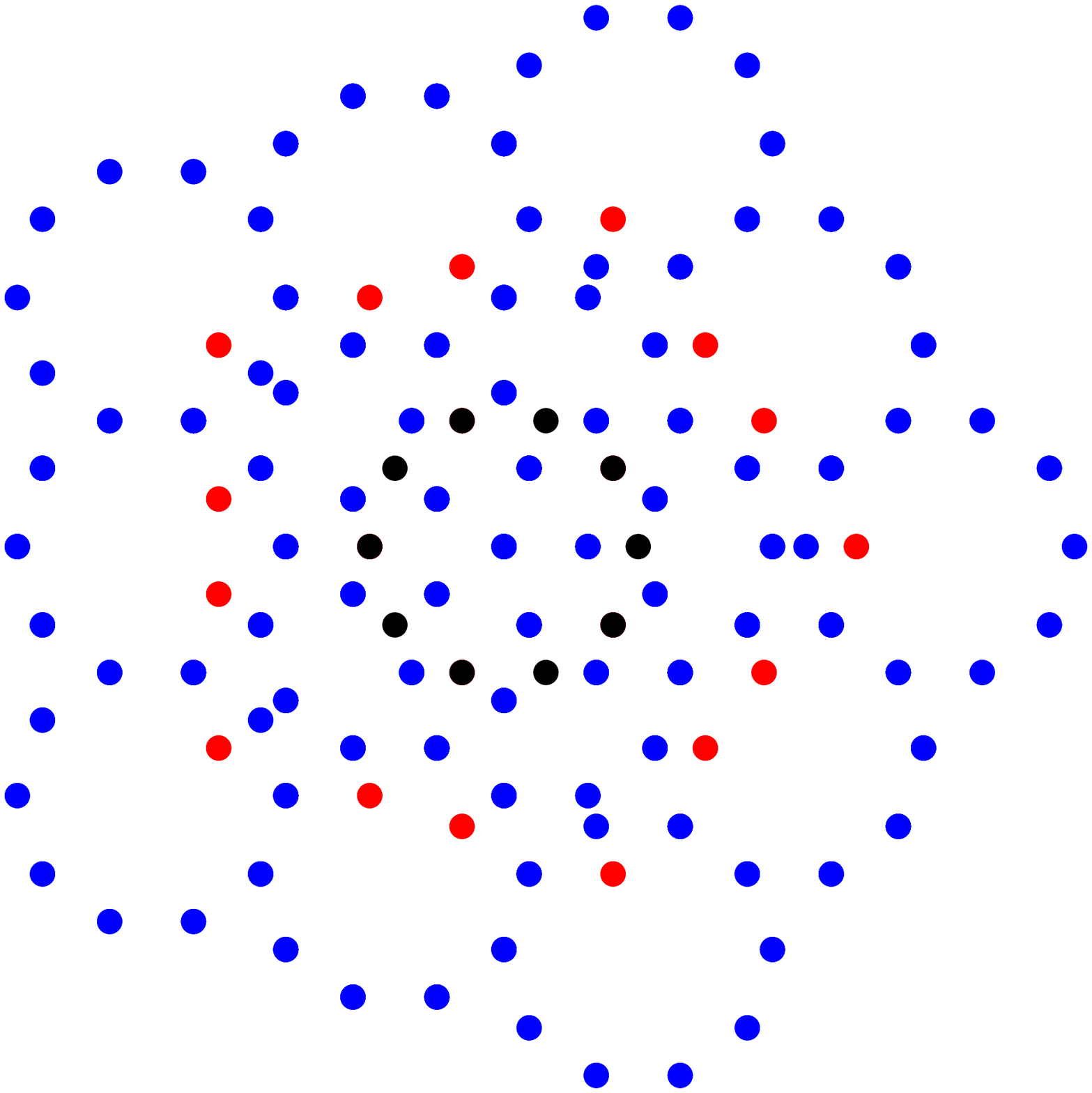}};
				\end{tikzpicture} &\hspace{1.5cm}
					\begin{tikzpicture}
					\node (img) [inner sep=0pt,above right]
					{\includegraphics[width=3.5cm]{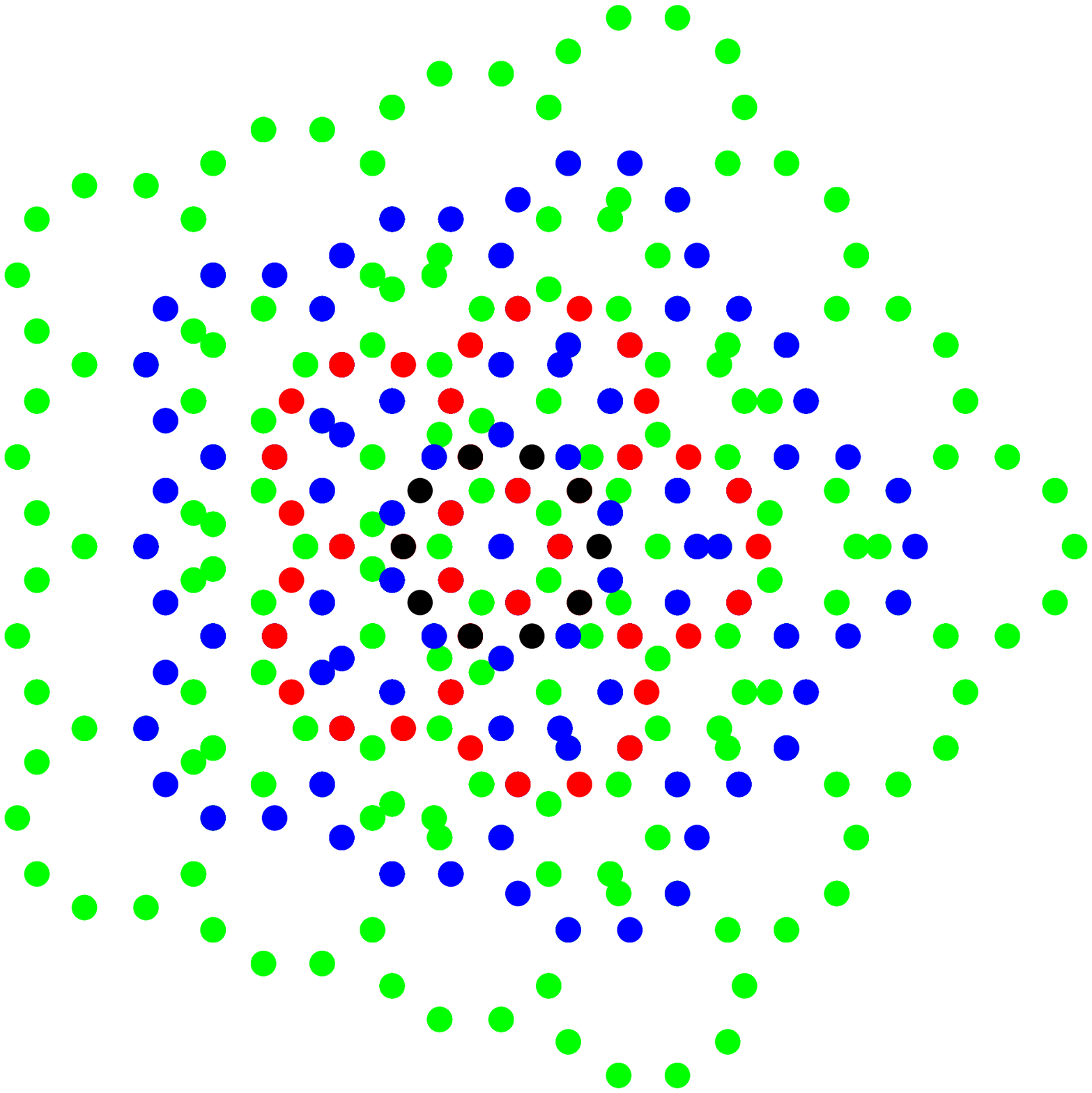}};
					\end{tikzpicture}

		\vspace{0.25cm}
			\\
			(d) $|P_2 (1)|=40$&(e) $|P_2 (2)|=101$&(f) $|P_2 (3)|=206$\\
	 	\vspace{0.25cm}

				\begin{tikzpicture}
				\node (img) [inner sep=0pt,above right]
				{\includegraphics[width=3.5cm]{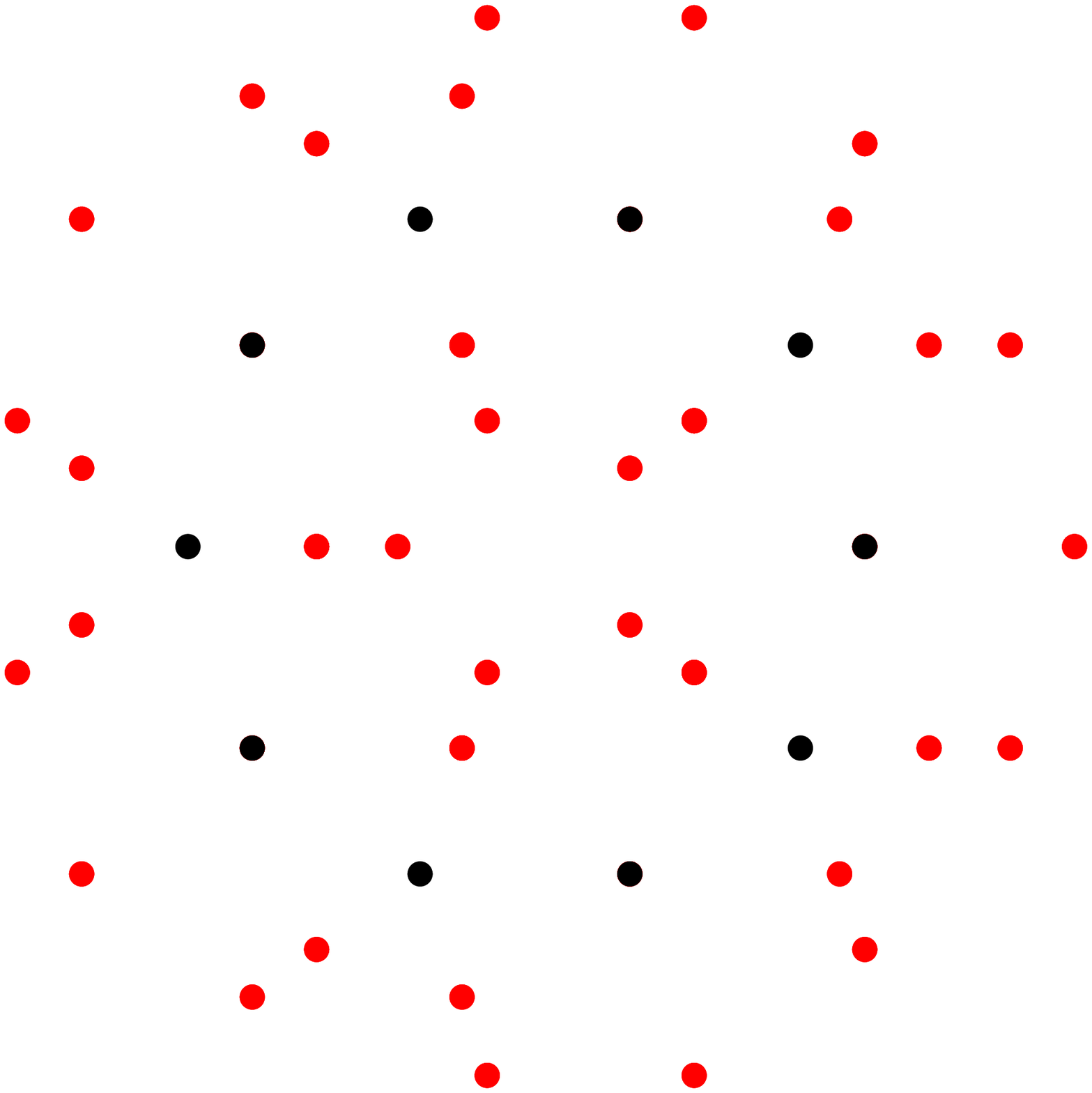}};
				\end{tikzpicture}&\hspace{1.5cm}
				\begin{tikzpicture}
				\node (img) [inner sep=0pt,above right]
				{\includegraphics[width=3.5cm]{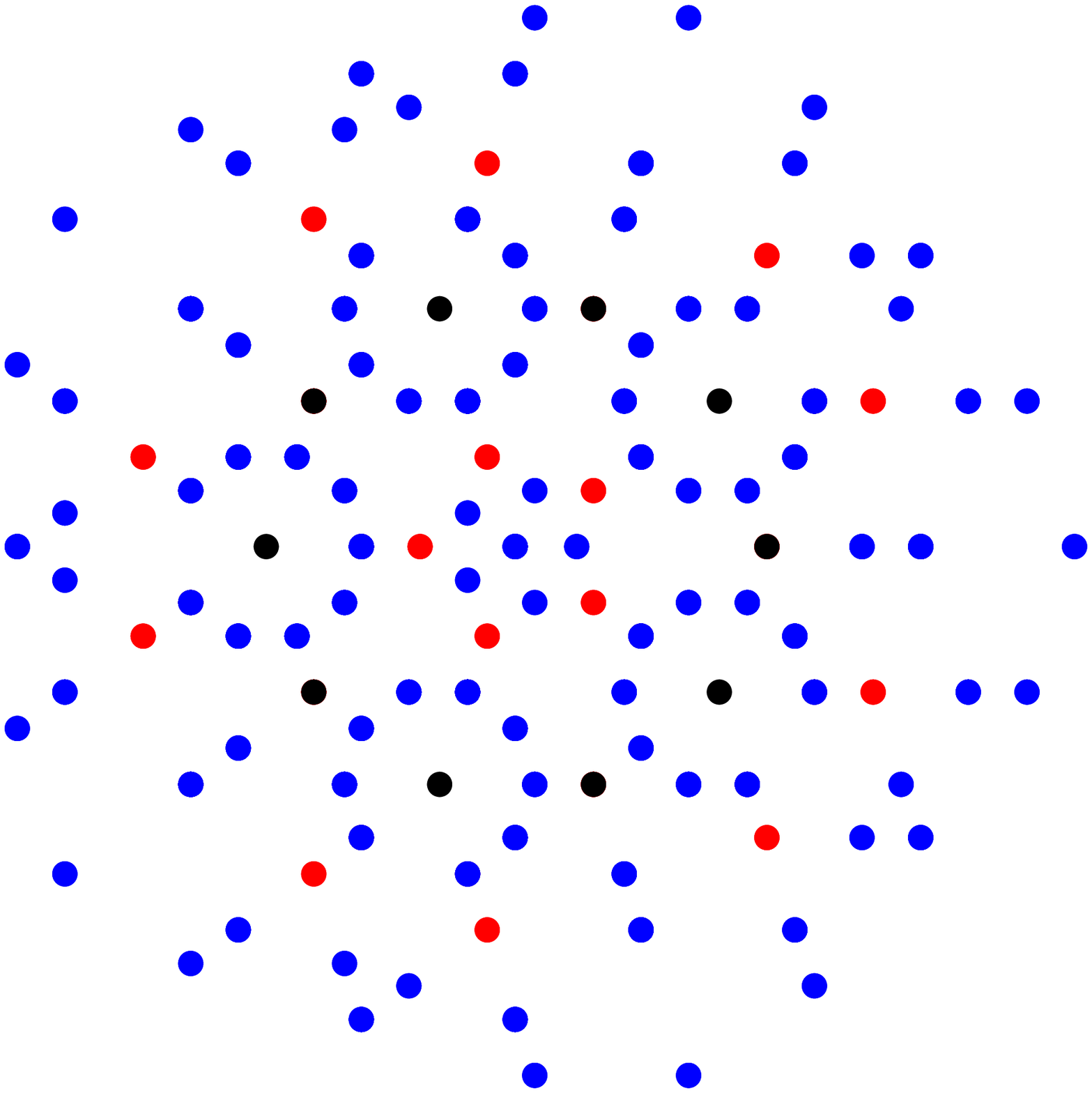}};
				\end{tikzpicture} &\hspace{1.5cm}
					\begin{tikzpicture}
					\node (img) [inner sep=0pt,above right]
					{\includegraphics[width=3.5cm]{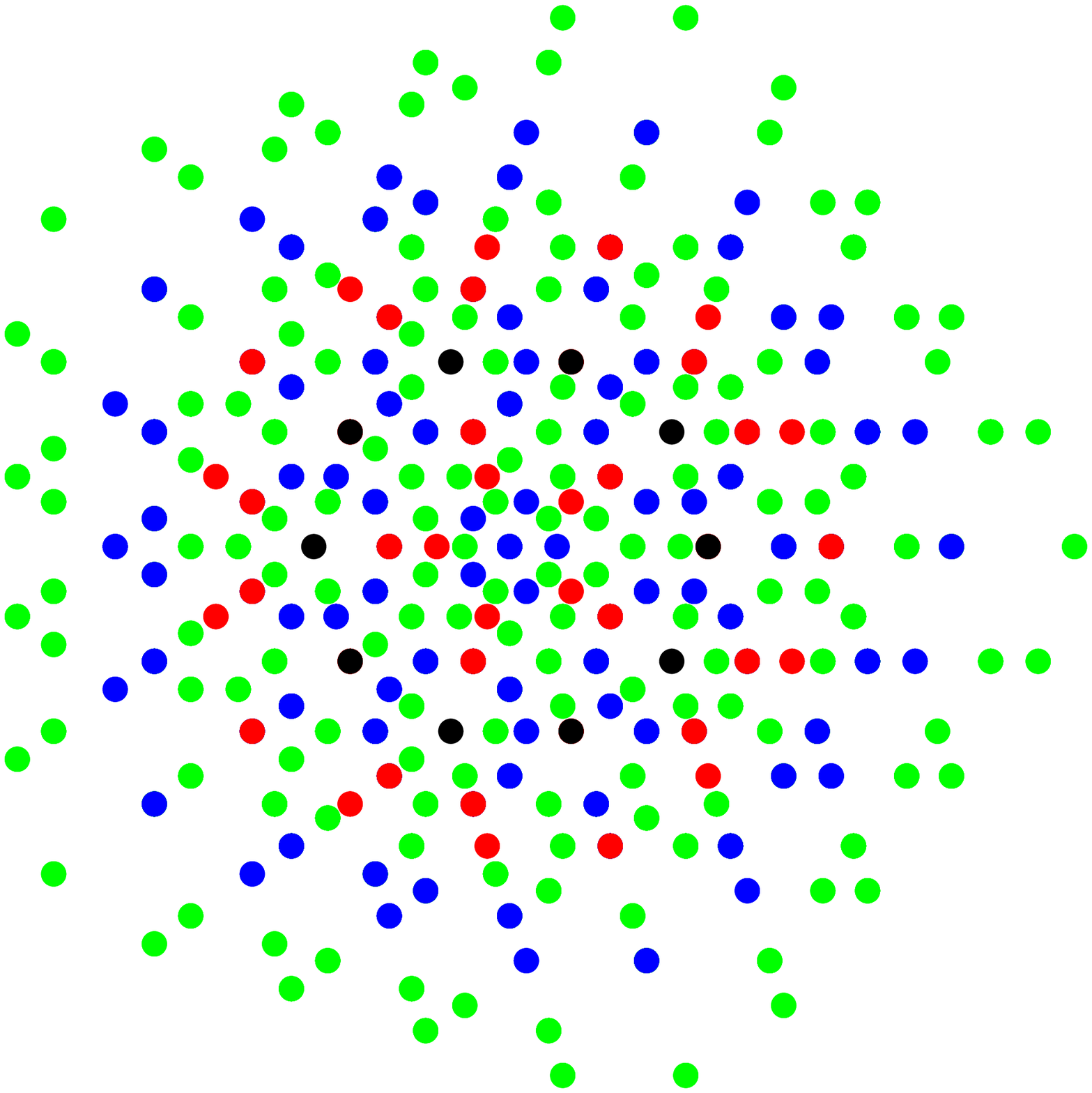}};
					\end{tikzpicture}

		\vspace{0.25cm}
			\\
			(g) $|P_2 (1)|=40$&(h) $|P_2 (2)|=101$&(i) $|P_2 (3)|=206$\\
	 	\vspace{0.25cm}
	  \end{tabular}	

  \caption[Hi]{Quasicrystal fragments for various affine extensions of $H_2$   (the black dots are the decagonal root system): Vertically, the affine root is of length $1$ and parallel to the highest root for the first row of panels (a-c), which are the $H_2^{aff}$-quasicrystal fragments considered in  \cite{Twarock:2002a}. The second row of panels (d-f) are the quasicrystal fragments obtained for the  extension $H_2^{=}$ induced from $A_4^=$ considered here, where the translation length is $\tau$. The third row of panels (g-i) are the quasicrystal fragments obtained for the  extension $\bar{H}_2^{=}$ with translation length $-\sigma$, or alternatively one may think of it in terms of $\left(A\left(H_2^{=}\right)\right)^T$. 
Horizontally, the panels show the point sets  $Q_2 (1)$, $Q_2 (2)$ and $Q_2 (3)$ derived from the root system by letting the translation operator $T$ act once (red dots), twice (blue dots) and three times (green dots). Thus, panels (a), (d) and (g) correspond to the point sets with cardinalities $36$ and $40$ listed in Table \ref{tabFibCard}. 
The cardinalities of the shells $|P_2 (n)|$ are also given. We note that Galois conjugate translations yield the same cardinalities.}
\label{figHaffT}
\end{center}
\end{figure}

Twarock and Patera  \cite{Twarock:2002a} considered simply-laced affine extensions $H_i^{aff}$ of $H_2$, $H_3$ and $H_4$ in the context of quasicrystals in two, three and four dimensions. 
As mentioned above, the affine reflections of those extensions yield translations $T$ of length $1$ along the highest root $\alpha_H$.
Such  $H_2^{aff}$-induced quasicrystal fragments $Q_2 (1)$, $Q_2 (2)$ and $Q_2 (3)$ are depicted in  panels (a-c) in Fig. \ref{figHaffT}.
Quasicrystals are often induced via a cut-and-project method from a projection of the root lattice, much as in the projection framework considered here. 
We thus consider the implications of our induced affine extensions for the quasicrystal setting. 
Our new projection construction yields different affine extensions from the above $H_i^{aff}$s, with translation lengths $\tau$ and $-\sigma$ along the highest root $\alpha_H$. 
These are  new cases with  asymmetric Cartan matrices and would therefore not arise in the  symmetric setting. 
However, following the same construction as in   \cite{Twarock:2002a} but with the different translation lengths $\tau$ and $-\sigma$ results in a similar subset of a vertex set of a quasicrystal.
For $H_2^=$, the resulting quasicrystal fragments $Q_2 (1)$, $Q_2 (2)$ and $Q_2 (3)$  are shown in panels (d-f) in Fig. \ref{figHaffT}, and the corresponding fragments for $\bar{H}_2^=$ are shown in panels (g-i). 
We furthermore give the cardinalities $|P_2 (1)|$, $|P_2 (2)|$ and $|P_2 (3)|$ in each case.
Here, panel (a) corresponds to the point set of cardinality $36$ in Table \ref{tabFibCard}, and panels (d) and (g) correspond to the  entries with cardinality $40$.
We again note that Galois conjugate affine roots yield the same cardinalities, as in the higher-dimensional cases before. 
Our novel construction thus leads to different types of quasicrystalline point arrays.
We will later consider whether the extensions from   \cite{Twarock:2002a} could similarly be induced from a higher-dimensional setting. We will see that they would correspond to Cartan matrices with positive and fractional (c.f. $H_3^<$ and $H_3^>$ in Eqs (\ref{CarH3affAS0}) and (\ref{CarH3affAS1})) off-diagonal entries (Section \ref{sec_concl}), making the case for a suitable generalisation of the standard approach by analysing generalised Cartan matrices over extended number fields.

The above projection procedure has thus yielded asymmetric induced Cartan matrices. 
In the context of Kac-Moody algebras and Coxeter groups, it is often of interest to know if an asymmetric (generalised) Cartan matrix $A$ is  {symmetrisable}:
\begin{defn}[Symmetrisability] An asymmetric Cartan matrix $A$ is  {\emph{symmetrisable}} if there exist a diagonal matrix $D$ with positive integer entries and a symmetric matrix $S$ such that $A=DS$.\label{defsym}
\end{defn} 
%
We have investigated the symmetrisability of the induced non-symmetric Cartan matrices \cite{DechantTwarockBoehm2011H3aff}. They are indeed symmetrisable, but the  entries of the resulting symmetric matrices are no longer from $\mathbb{Z}[\tau]$ (see also the discussion in Section \ref{sec_concl}). Given that the Cartan matrix is defined in terms of the geometry of the roots as $A_{ij}=2(\alpha_i\vert \alpha_j)/(\alpha_i\vert \alpha_i)$, i.e. is given in terms of the angles between root vectors and their length, such matrices would imply a geometry for the root system that is no longer  compatible with an (aperiodic) quasilattice, and the corresponding affine groups would therefore lose their distinctive structure. 
Indeed, it is that relation with quasilattices that makes these affine extended groups mathematically interesting, and distinguishes them from the free group obtained by an extension via a random translation. Therefore, we will not  use these symmetric matrices in our context.

\subsection{Invariance of the projections under Dynkin diagram automorphisms}\label{sec_autos}

Before we classify the induced affine extensions, we show in this section that no additional induced extensions arise from 
 the Dynkin diagram automorphisms of the simple and affine Lie algebras considered above.

\begin{lem}[Invariance of the induced extensions] The induced affine extensions $H_2^=$, $H_3^>$, $H_3^=$, $H_3^<$, $\bar{H}_2^=$, $\bar{H}_3^>$, $\bar{H}_3^=$ and $\bar{H}_3^<$ are invariant under the Dynkin diagram automorphisms of $A_4$, $D_6$, $A_4^=$, $D_6^>$, $D_6^=$ and $D_6^<$.
\end{lem}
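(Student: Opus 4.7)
The plan is to check, diagram by diagram, that the action of every Dynkin diagram automorphism of the crystallographic (finite or affine) group commutes with the projection $\pi_\parallel$ (and $\pi_\perp$) up to a relabelling of the non-crystallographic simple roots, so that the induced affine Cartan matrix is preserved as a matrix under a simultaneous permutation of rows and columns.

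First I would enumerate the outer automorphism groups: $\mathrm{Aut}(A_4)=\mathbb{Z}_2$ reflecting the chain, $\mathrm{Aut}(D_6)=\mathbb{Z}_2$ swapping the short prongs $\alpha_5\leftrightarrow\alpha_6$, $\mathrm{Aut}(A_4^=)$ the dihedral group of order $10$ acting on the pentagon, $\mathrm{Aut}(D_6^=)\cong\mathbb{Z}_2\times\mathbb{Z}_2$ generated by $\alpha_0\leftrightarrow\alpha_1$ and $\alpha_5\leftrightarrow\alpha_6$, and $\mathrm{Aut}(D_6^>)=\mathrm{Aut}(D_6^<)=\mathbb{Z}_2$ swapping $\alpha_5\leftrightarrow\alpha_6$. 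For each generator $\sigma$ I would write down its explicit action on the simple roots $\alpha_i$ of the relevant crystallographic diagram.

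Next I would show that every such $\sigma$ descends to an automorphism $\tilde\sigma$ of the corresponding $H_i$ root system, i.e.\ $\pi_\parallel\circ\sigma=\tilde\sigma\circ\pi_\parallel$, and analogously for $\pi_\perp$. The key point is that the folding pairings defining $\pi_\parallel$ (Figs.~\ref{figE8}, \ref{figD6}) are themselves respected by every diagram automorphism: $\sigma$ permutes whole folding pairs, so each $\pi_\parallel$-fibre is mapped to another $\pi_\parallel$-fibre, and the induced map on $H_i$ permutes the simple roots $a_j$, possibly with a rescaling by powers of $\tau$ that is absorbed into the standard $H_i$-action. In parallel I would verify that the affine root $\alpha_0$ of each of the four extensions relevant here is either fixed by $\sigma$ or sent to another simple root of the extended diagram; for the unextended $A_4$ and $D_6$ this is the familiar statement that the highest root is invariant under all diagram automorphisms, visible in the expansions (\ref{affroot2}), (\ref{affroot0}), (\ref{affroot1}), (\ref{affroota4}) through the palindromic distribution of coefficients, while for the affine extensions it follows from the transitive action of $\mathrm{Aut}$ on the orbit of $\alpha_0$ within the extended diagram.

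Combining these observations gives $\pi_\parallel(\sigma(\alpha_0))=\tilde\sigma(\pi_\parallel(\alpha_0))=\tilde\sigma(-a_0)$, so the induced affine root is transformed only by an $H_i$-automorphism, and the induced affine Cartan matrix is preserved up to a simultaneous relabelling of rows and columns. The same argument applies verbatim to $\pi_\perp$ and the Galois-conjugate extensions $\bar H_2^=$, $\bar H_3^=$, $\bar H_3^>$ and $\bar H_3^<$. The main obstacle I anticipate is handling the full dihedral symmetry of $A_4^=$: the cyclic rotation moves $\alpha_0$ through all five positions around the pentagon, so one must track the $\pi_\parallel$-image of every simple root under each rotation and verify case-by-case that the resulting induced Cartan matrix agrees with $A(H_2^=)$ of (\ref{CarH2affAS}) up to relabelling, rather than relying only on the single reflection inherited from the finite $A_4$.
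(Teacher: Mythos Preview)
Your central structural claim --- that every Dynkin diagram automorphism $\sigma$ permutes the folding pairs and hence descends to an $H_i$-automorphism $\tilde\sigma$ with $\pi_\parallel\circ\sigma=\tilde\sigma\circ\pi_\parallel$ --- fails in the $D_6$ case. The folding pairs for $D_6\to H_3$ are $(\alpha_1,\alpha_5)$, $(\alpha_2,\alpha_4)$, $(\alpha_3,\alpha_6)$, and the prong swap $\alpha_5\leftrightarrow\alpha_6$ sends the pair $(\alpha_1,\alpha_5)$ to $(\alpha_1,\alpha_6)$, which is not a folding pair. The paper explicitly flags this: the projection ``is not symmetric in $\alpha_5$ and $\alpha_6$.'' Concretely, $\pi_\parallel(\alpha_1)=a_1$ is fixed by $\sigma$, while $\pi_\parallel(\alpha_5)=\tau a_1$ is sent to $\pi_\parallel(\alpha_6)=a_3$; no map $\tilde\sigma$ on $H_3$ can satisfy both $\tilde\sigma(a_1)=a_1$ and $\tilde\sigma(\tau a_1)=a_3$. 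Your chain $\pi_\parallel(\sigma(\alpha_0))=\tilde\sigma(\pi_\parallel(\alpha_0))$ therefore has no meaning here.

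The paper's route is both simpler and sidesteps this obstacle entirely: it shows that the affine root $\alpha_0$ itself is fixed by every relevant automorphism, by direct inspection of the coefficients in (\ref{affroot2})--(\ref{affroota4}) (for instance, $\alpha_5$ and $\alpha_6$ carry equal coefficients in all three $D_6$ affine roots). From $\sigma(\alpha_0)=\alpha_0$ one gets $\pi_\parallel(\sigma(\alpha_0))=\pi_\parallel(\alpha_0)$ immediately, so the induced affine root is literally unchanged and no $H_i$-relabelling is needed. You already observe this palindromic symmetry in passing, but you treat it as a parallel ingredient rather than the whole argument; in fact it is sufficient on its own and your commutation step is both unnecessary and incorrect. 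For the affine diagrams $A_4^=$ and $D_6^=$, where the automorphism group moves $\alpha_0$ among several nodes, the paper rewrites the affine relation in manifestly symmetric form (e.g.\ $\alpha_0+\alpha_1+\alpha_2+\alpha_3+\alpha_4=0$ for $A_4^=$) and argues that any choice of which node plays the role of the affine root yields the same projected root --- again without ever needing $\sigma$ to intertwine with $\pi_\parallel$.
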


\begin{proof} We consider the four cases in turn.
\begin{enumerate} \item The Dynkin diagram of $D_6$ has a $\mathbb{Z}_2$-automorphism that acts by permuting the roots $\alpha_5$ and $\alpha_6$ (denoted as $5\leftrightarrow 6$ in the following). The projection displayed in Fig. \ref{figD6}, however, is not symmetric in $\alpha_5$ and $\alpha_6$. Therefore,  the choice of projection could potentially alter the induced affine extension. 
However, as can be seen from equations (\ref{affroot2}), (\ref{affroot0}) and (\ref{affroot1}), all three possible affine roots are in fact invariant under the exchange of  $a_5$ and $a_6$, so that the result of the projection is not affected.

\item Similarly, the simply-laced extension $D_6^=$ has an additional $\mathcal{D}_4$ automorphism symmetry (here $\mathcal{D}_n$ denotes the dihedral group of order $n$) that allows one to swap the roots labelled by $5\leftrightarrow 6$ or $0\leftrightarrow 1$ separately, as well as an overall left-right symmetry of the diagram obtained by swapping the pairs of terminal roots $(0, 1) \leftrightarrow (5,6)$ together with $2\leftrightarrow 4, 3\leftrightarrow 3$\cite{FuchsSchweigert1997}. 
This symmetry is made manifest in the diagram shown in Fig. \ref{figD6p}. 
Thus, the four terminal roots are equivalent, and one could define four different projections, depending on which terminal root one considered as the affine root. Once one decides on the affine root, the rest of the diagram is fixed by the projection. However, the formula for the affine root is symmetric in $(0, 1, 5, 6)$ as can be seen from  (\ref{affroot2}). Thus,  the induced affine extension is again independent of which projection one chooses.

\condcomment{\boolean{includefigs}}{
\condcomment{\boolean{tikzfigs}}{
\begin{figure}
\begin{center}
\begin{tikzpicture}[
    knoten/.style={        circle,      inner sep=.15cm,        draw}
   ]
  \node at (1,1.5) (knoten0) [knoten,  color=white!0!black] {};  
  \node at (1,-.5) (knoten1) [knoten,  color=white!0!black] {};
  \node at  (3,.5) (knoten2) [knoten,  color=white!0!black] {};
  \node at  (5,.5) (knoten3) [knoten,  color=white!0!black] {};
  \node at  (7,.5) (knoten4) [knoten,  color=white!0!black] {};
  \node at (9,-.5) (knoten5) [knoten,  color=white!0!black] {};
  \node at (9,1.5) (knoten6) [knoten,  color=white!0!black] {};

\node at  (0.5,1.5)  (alpha0) {$\alpha_0$};
\node at  (.5,-.5)  (alpha1) {$\alpha_1$};
\node at  (3,0)  (alpha2) {$\alpha_2$};
\node at  (5,0)  (alpha3) {$\alpha_3$};
\node at  (7,0)  (alpha4) {$\alpha_4$};
\node at  (9.5,-.5)  (alpha5) {$\alpha_5$};
\node at  (9.5,1.5)  (alpha6) {$\alpha_6$};

  \path  (knoten0) edge (knoten2);
  \path  (knoten1) edge (knoten2);
  \path  (knoten2) edge (knoten3);
  \path  (knoten3) edge (knoten4);
  \path  (knoten4) edge (knoten5);
  \path  (knoten4) edge (knoten6);
 
\end{tikzpicture} 
\caption[$D_6$ auto]{A more symmetric version of the $D_6^=$ Dynkin diagram, that makes the $\mathcal{D}_4$-automorphism symmetry manifest.}
\label{figD6p}
\end{center}
\end{figure}}

\condcomment{\boolean{psfigs}}{
\begin{figure}
\begin{center}
	\includegraphics[width=12cm]{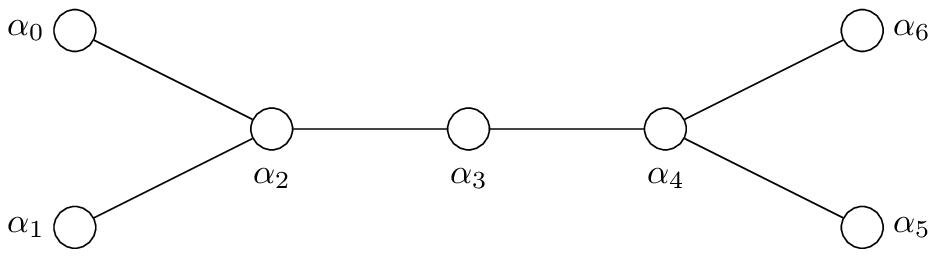}
\caption[$D_6$ auto]{A more symmetric version of the $D_6^=$ Dynkin diagram, that makes the $\mathcal{D}_4$-automorphism symmetry manifest.}
\label{figD6p}
\end{center}
\end{figure}}
}

\item Likewise, the $A_4$ diagram has a $\mathbb{Z}_2$-symmetry swapping left and right, that is broken by the projection. However, the affine root (\ref{affroota4}) is again invariant, so that the induced affine extension is not affected. 

\item The extended diagram $A_4^=$ has an enhanced $\mathcal{D}_5$-automorphism symmetry \cite{FuchsSchweigert1997}, under which the affine root can be seen as  invariant by rewriting Eq. (\ref{affroota4}) as
\begin{equation}
\alpha_0+\alpha_1+\alpha_2+\alpha_3+\alpha_4=0.\label{affroota4sym}
\end{equation}
Thus, in this case, one could choose any of the roots of the extended diagram as the affine root, and the others are then  fixed by the projection. 
\end{enumerate}
The invariance is at the level of the affine roots before projection, so  it does not matter into which invariant subspace one projects. 
Thus, the Dynkin diagram automorphisms do not affect the induced affine extensions.  
\end{proof}

\subsection{Extending by two nodes}\label{sec_further}
Until now, we have considered extending a diagram by a single root, and we have projected this single affine root onto the single induced affine root. 
However, as is shown in Fig. \ref{figE8}, other roots are projected in pairs, e.g. $\alpha_1$ and $\alpha_7$ project onto $a_1$ and $\tau a_1$, which results in the single $H_4$-root $a_1$.
In analogy, we now consider affine extensions of the diagrams by two nodes such that  the two additional roots project as a pair onto a single affine root.
This can be achieved  by further extending the above affine extensions by another node, or by extending the initial diagrams by two nodes at once. 


We first check whether further extending the above groups $A_4^=$, $D_6^>$, $D_6^=$, $D_6^<$ and $E_8^=$ by another node leads to new induced affine extensions. We show that in such a case, the only possibilities are in fact the above diagrams with a disconnected node. Thus, this type of extension is trivial, and the additional affine root will not be a superposition of the other roots: 

\begin{lem} \label{lem1} The affine extensions of $A_4^=$, $D_6^>$, $D_6^=$, $D_6^<$ and $E_8^=$ by a further node are  disconnected.
\end{lem}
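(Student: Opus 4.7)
The plan is to reduce the claim to a linear-algebra obstruction on the kernel structure of the affine Cartan matrices $A(A_4^=)$, $A(D_6^>)$, $A(D_6^=)$, $A(D_6^<)$, $A(E_8^=)$, and then show that the determinant constraint $\det A^{aff\prime}=0$ from Definition \ref{defKMtype} forces any additional node to be disconnected.

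First, I would invoke the standard fact that for each of these five affine Cartan matrices $A^{aff}$ of size $n$, the rank is $n-1$, and the (one-dimensional) right and left kernels are spanned, respectively, by strictly positive vectors $v$ (the marks) and $u$ (the comarks). In the simply-laced cases $A_4^=$, $D_6^=$, $E_8^=$ one has $u=v$; in the asymmetric but symmetrizable cases $D_6^<$ and $D_6^>$, writing $A^{aff}=DS$ with $D$ a positive diagonal matrix and $S$ symmetric positive semi-definite immediately gives $u=D^{-1}v>0$.

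Next, I would write a generic one-node extension as
$$A^{aff\prime}=\begin{pmatrix} A^{aff} & b \\ c^T & 2 \end{pmatrix},$$
with $b,c\in\mathbb{R}^n$ having non-positive entries satisfying the componentwise compatibility $b_i=0 \Leftrightarrow c_i=0$. Since $\det A^{aff\prime}=0$ is equivalent to the existence of a non-trivial null vector $(x,\alpha)$, the defining equations are $A^{aff}x+\alpha b=0$ and $c^Tx+2\alpha=0$. Splitting into cases: if $\alpha=0$, then $x=\lambda v$ and the second equation becomes $\lambda\, c^T v=0$; since $v>0$ and $c\le 0$, the inequality $c^T v\le 0$ is saturated only when $c=0$, so either $\lambda=0$ (trivial) or $c=0$. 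If instead $\alpha\neq 0$, then $A^{aff}x=-\alpha b$ must be solvable, i.e. $-\alpha b\in \mathrm{Im}(A^{aff})=(\ker A^{aff\,T})^\perp = u^\perp$, which gives $u^T b=0$; the same sign argument applied to $u>0$ and $b\le 0$ yields $b=0$. Either way, the componentwise compatibility propagates $b=0$ and $c=0$ together, so the new node is disconnected from $A^{aff}$.

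The main subtlety, which will not be a serious obstacle, is ensuring positivity of the comark vector $u$ in the asymmetric cases $D_6^>$ and $D_6^<$; this is standard via symmetrizability. I expect the argument to be uniform across all five cases and to yield the slightly stronger statement that no Kac-Moody-type extension by a single \emph{connected} node is possible, which is precisely the content of the lemma.
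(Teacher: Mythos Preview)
Your argument is correct. Both approaches ultimately rest on the same underlying fact---that the right and left null vectors of each of the five affine Cartan matrices have strictly positive entries---but the paper and you access it differently. The paper proceeds by direct computation: it expands $\det A^{aff\prime}$ explicitly for each of the five cases, observes that (since $\det A^{aff}=0$) the result is a quadratic form $-\sum_{i,j}\kappa_{ij}\,b_i c_j$ in the new row/column entries with all coefficients $\kappa_{ij}>0$, and concludes from the sign conditions that the determinant vanishes only when $b=c=0$. Your route is more structural: you invoke the rank-$(n-1)$ property and positivity of the mark and comark vectors $v,u$ once, then use the Fredholm alternative $\mathrm{Im}(A^{aff})=u^\perp$ to force $u^Tb=0$ (hence $b=0$) when $\alpha\neq 0$, and $c^Tv=0$ (hence $c=0$) when $\alpha=0$. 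The two are in fact the same computation in disguise, since $\mathrm{adj}(A^{aff})$ is a positive multiple of $vu^T$, so the paper's quadratic form is proportional to $(c^Tv)(u^Tb)$. What your approach buys is uniformity and generality: it works verbatim for any affine Cartan matrix with strictly positive marks and comarks, without case-by-case determinant expansion; what the paper's approach buys is concreteness, since the positivity of the coefficients can simply be read off.
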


\begin{proof} The determinants of general Kac-Moody-type affine extensions of  $A_4^=$, $D_6^>$, $D_6^=$, $D_6^<$ and $E_8^=$ are quadratic in the entries in the new row and column with negative coefficients. Since the entries are non-positive, the determinant is therefore also non-positive. Since the zero entries in a Cartan matrix are symmetric, the determinant vanishes if and only if all the entries in the new row and column vanish. Thus, the extended diagram has a disconnected node.  
\end{proof}

The other possibility is to extend by two nodes at once, and to demand that the Cartan matrix of the double-extension be affine, i.e. that it has zero determinant, but that none of the principal minors has this property. 

\begin{defn}[Affine double extension] An \emph{affine double extension}  is a Kac-Moody-type extension of a diagram by two nodes.
\end{defn}

We analyse here the simply-laced double extensions with a trivial  projection kernel, which give 292 such matrices for $E_8$, 27 for $D_6$, and 6 for $A_4$. Note, however, that the number of different Dynkin diagrams is actually lower. For instance, there are only three diagrams that occur for $A_4^{++}$, which are displayed in Fig. \ref{figA4pp}. The first  corresponds to a single extension of the $D$-series, c.f. $D_6^=$ above. The second is  a single extension of $A_5$, and the third a diagram with a trivial disconnected node. The diagrams for $D_6$ and $E_8$ have a richer mathematical structure. However, it shall suffice here to consider these matrices from the projection point of view -- a more detailed analysis of double extensions will be  relegated to future work.

\begin{lem} \label{lem2} There are no   simply-laced affine double extensions of $A_4$, $D_6$ and $E_8$ with trivial projection kernel. 
\end{lem}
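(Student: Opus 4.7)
The plan is a finite enumeration of simply-laced double extensions combined with a structural reduction via the classification of affine simply-laced Dynkin diagrams.

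First, I would set up the double-extended Cartan matrix in block form
\begin{equation}
M = \begin{pmatrix} A & B \\ B^T & C \end{pmatrix},
\end{equation}
with $A$ the Cartan matrix of the original finite group ($A_4$, $D_6$ or $E_8$), $B$ an $n \times 2$ matrix whose entries lie in $\{0,-1\}$ by the simply-laced constraint, and $C = \begin{pmatrix} 2 & -c \\ -c & 2 \end{pmatrix}$ with $c \in \{0,1\}$. Since $A$ is positive definite, the Schur complement gives $\det M = \det(A)\,\det(C - B^T A^{-1} B)$, so the vanishing condition $\det M = 0$ reduces to a single quadratic relation in the entries of $B$ and $c$. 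The strict affine requirement that no principal minor vanish further excludes the case of a disconnected new node by essentially the argument of Lemma \ref{lem1}.

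Second, I would invoke the ADE classification of connected simply-laced affine Dynkin diagrams of rank $n+2$ that contain the original Cartan matrix as a principal sub-diagram. For $E_8$ (target rank $10$), neither $D_9^{(1)}$ nor $A_9^{(1)}$ admits an $E_8$ sub-diagram (branch-length mismatch and absence of branches, respectively), and $E_8^{(1)}$ has only $9$ nodes; hence no connected candidate arises and the statement is vacuous for $E_8$. For $D_6$, the candidates are $D_7^{(1)}$ and $E_7^{(1)}$; for $A_4$, they are $A_5^{(1)}$ and $D_5^{(1)}$, matching the two non-trivial Dynkin diagrams displayed in Fig.~\ref{figA4pp}.

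Finally, for each remaining candidate I would verify that the trivial projection kernel condition fails. Using the explicit form of $\pi_\parallel$ in Eq.~(\ref{projpi}) and its restrictions to $D_6$ and $A_4$, the two added simple roots must project to vectors exhibiting the paired-to-single-root structure $(a_0, \tau a_0)$ that underlies the embedding $\mathbf{8} = \mathbf{4} + \bar{\mathbf{4}}$. A direct linear-algebra computation shows that in each of the cases $D_7^{(1)}$, $E_7^{(1)}$, $A_5^{(1)}$, $D_5^{(1)}$ the two added nodes attach at positions whose projected images are either linearly independent in $H_i$, or one of them vanishes, so the required pairing is violated.

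The main obstacle is the geometric case analysis of the projection for each candidate, in particular disentangling the embedding of the added nodes from the ambient root space of the original group; however, once the ADE classification has reduced the problem to a handful of named affine diagrams, the remaining verification is a finite and explicit check, and the disconnected cases are handled uniformly by the argument of Lemma \ref{lem1}.
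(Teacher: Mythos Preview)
Your approach differs substantially from the paper's, and it has two genuine gaps.

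The paper's own proof is a brute enumeration: it counts the simply-laced double extensions (in the sense of the Definition immediately preceding the lemma, which requires only $\det M=0$) as $292$ for $E_8$, $27$ for $D_6$, and $6$ for $A_4$, and then verifies case by case that in none of them can both additional roots be written as linear combinations of the simple roots of the unextended group. Your ADE reduction, by contrast, yields \emph{zero} connected candidates for $E_8$. The discrepancy is that you implicitly impose positive semi-definiteness (via the principal-minor clause), whereas the paper's actual enumeration does not enforce it: one of the three $A_4$ diagrams displayed in Fig.~\ref{figA4pp} carries a disconnected node and hence has $A_4^{(1)}$ as a principal submatrix with vanishing determinant. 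So your reduction omits most of the cases the paper actually treats, and in particular your claim that the $E_8$ case is vacuous is inconsistent with the paper's count of $292$ matrices.

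Second, you misread ``trivial projection kernel''. In the paper's usage (look at the one-sentence proof), the condition is simply that both additional roots lie in the linear span of the original simple roots, so that $\pi_\parallel$ --- which is only defined on that $n$-dimensional space --- can be applied at all. It has nothing to do with the paired structure $(a_0,\tau a_0)$ you invoke. With the correct reading, the verification for your named ADE candidates would collapse to a one-line corank argument: a connected simply-laced affine type has a one-dimensional null space, so its $n{+}2$ roots span an $(n{+}1)$-dimensional space strictly containing the $n$-dimensional span of the original simple roots, whence at least one of the two new roots lies outside. Your computation about projected images and pairings is therefore beside the point.
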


\begin{proof} In all the cases mentioned above (292 for $E_8$,  27 for $D_6$ and 6 for $A_4$), it is not possible to express both additional roots simultaneously  in terms of linear combinations of the roots of the unextended group, c.f. the diagrams for $A_4$ in Fig.  \ref{figA4pp}. Hence, the Cartan matrices can be obtained only in terms of higher-dimensional vectors, i.e. the kernel is non-trivial. 
\end{proof}

\begin{cor} Simply-laced affine double extensions of $A_4$, $D_6$ and $E_8$ with trivial projection kernel do not induce any further affine extensions of the non-crystallographic Coxeter groups. 
\end{cor}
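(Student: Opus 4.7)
The plan is to show that the corollary follows essentially immediately from Lemma \ref{lem2}, together with the definition of the induced affine Cartan matrix. The induction mechanism considered here takes a simple-root realisation of the affine root in the higher-dimensional (crystallographic) setting, applies the projection $\pi_\parallel$ (or $\pi_\perp$), and uses the image as an additional affine root for the projected, non-crystallographic Coxeter group. This mechanism is therefore only defined when the higher-dimensional affine roots admit an expression as linear combinations of the simple roots of the unextended diagram, that is, when the projection kernel on the additional roots is trivial.

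First, I would recall that Lemma \ref{lem1} already rules out the first possibility for obtaining a double extension, namely iterating the single-node procedure: any further Kac-Moody-type extension of $A_4^=$, $D_6^>$, $D_6^=$, $D_6^<$ or $E_8^=$ satisfying the determinant constraint must be disconnected, so the extra node carries a disconnected affine root which trivially projects to a disconnected affine root and yields nothing new beyond the single-node case already treated in Section \ref{sec_proj_aff}.

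Next I would invoke Lemma \ref{lem2}, which states that there are no simply-laced affine double extensions of $A_4$, $D_6$ and $E_8$ (obtained by adding two nodes at once and demanding that the full extended matrix be affine without any affine proper principal submatrix) having trivial projection kernel. Because the induction of a lower-dimensional affine root proceeds by first expressing each additional higher-dimensional affine root as a $\mathbb{Z}$-linear combination of the simple roots $\alpha_i$ of the unextended diagram and then applying $\pi_\parallel$ or $\pi_\perp$ by linearity, the absence of any such expression means the induction map is simply not defined on these candidates. Consequently no new induced affine extensions of $H_2$, $H_3$ or $H_4$ are produced.

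I do not anticipate a genuine obstacle here: the corollary is really a packaging statement combining Lemmas \ref{lem1} and \ref{lem2} with the fact that the induced affine root is by construction the image under $\pi$ of a vector expressed in the simple-root basis of the crystallographic group. The only point that requires mild care is to make explicit that ``trivial projection kernel'' is exactly the condition needed for the induction recipe of Definition of the induced affine root to make sense; once that is stated, the corollary is immediate.
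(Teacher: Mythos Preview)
Your proposal is correct and matches the paper's approach: the paper states the Corollary immediately after Lemma~\ref{lem2} without a separate proof, treating it as a direct (indeed vacuous) consequence---since by Lemma~\ref{lem2} there are \emph{no} simply-laced affine double extensions with trivial projection kernel, the class of objects in the hypothesis is empty and the conclusion follows trivially. Your inclusion of Lemma~\ref{lem1} and the explanation of why the trivial-kernel condition is what makes the induction map well-defined are helpful context for the reader but are not strictly required for this particular Corollary; they belong more naturally to the proof of the Classification Theorem that follows.
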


\condcomment{\boolean{includefigs}}{
\condcomment{\boolean{tikzfigs}}{
\begin{figure}
	\begin{center}
       \begin{tabular}{@{}c@{ }c@{ }c@{ }}
		\begin{tikzpicture}[scale=0.50,
		    knoten/.style={        circle,      inner sep=.15cm,        draw}
		   ]
		  \node at (1,1.5) (knoten11) [knoten,  color=white!0!black] {};  
		  \node at (1,-.5) (knoten0) [knoten,  color=white!0!black] {};
		  \node at  (3,.5) (knoten1) [knoten,  color=white!0!black] {};
		  \node at (5,.5) (knoten4) [knoten,  color=white!0!black] {};
		  \node at (7,-.5) (knoten5) [knoten,  color=white!0!black] {};
		  \node at (7,1.5) (knoten6) [knoten,  color=white!0!black] {};
	\node at (9,0.5) {};

		  \path  (knoten0) edge (knoten1);
		  \path  (knoten11) edge (knoten1);
		 \path  (knoten1) edge (knoten4);
		  \path  (knoten4) edge (knoten5);
		  \path  (knoten4) edge (knoten6);
\hspace{2cm}
		\end{tikzpicture}& 
		\begin{tikzpicture}[scale=0.50,
		    knoten/.style={        circle,      inner sep=.15cm,        draw}
		   ]
		  \node at (3,2.0) (knoten11) [knoten,  color=white!0!black] {};
		  \node at (5,2.0) (knoten0) [knoten,  color=white!0!black] {};
		  \node at (1,0.5) (knoten1) [knoten,  color=white!0!black] {};
		  \node at (3,0.5) (knoten2) [knoten,  color=white!0!black] {};
		  \node at (5,0.5) (knoten3) [knoten,  color=white!0!black] {};
		  \node at (7,0.5) (knoten4) [knoten,  color=white!0!black] {};
 		\node at (9,0.5) {};

		  \path  (knoten11) edge (knoten1);
		  \path  (knoten0) edge (knoten11);
		  \path  (knoten0) edge (knoten4);
		  \path  (knoten1) edge (knoten2);
		  \path  (knoten2) edge (knoten3);
		  \path  (knoten3) edge (knoten4);
\hspace{2cm}
		\end{tikzpicture} &
 		\begin{tikzpicture}[scale=0.50,
		    knoten/.style={        circle,      inner sep=.15cm,        draw}
		   ]
		  \node at (7,2.0) (knoten11) [knoten,  color=white!0!black] {};
		  \node at (4,2.0) (knoten0) [knoten,  color=white!0!black] {};
		  \node at (1,0.5) (knoten1) [knoten,  color=white!0!black] {};
		  \node at (3,0.5) (knoten2) [knoten,  color=white!0!black] {};
		  \node at (5,0.5) (knoten3) [knoten,  color=white!0!black] {};
		  \node at (7,0.5) (knoten4) [knoten,  color=white!0!black] {};
 		\node at (9,0.5) {};

		  \path  (knoten0) edge (knoten1);
		  \path  (knoten0) edge (knoten4);
		  \path  (knoten1) edge (knoten2);
		  \path  (knoten2) edge (knoten3);
		  \path  (knoten3) edge (knoten4);
		\end{tikzpicture} 
  \\
  \end{tabular}
  \caption[$A_4^{++}$]{Double extensions $A_4^{++}$ of $A_4$.}
\label{figA4pp}
\end{center}
\end{figure}}

\condcomment{\boolean{psfigs}}{
\begin{figure}
	\begin{center}
		\includegraphics[width=16cm]{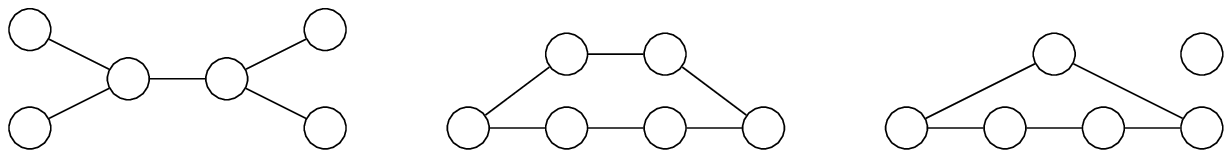}
  \caption[$A_4^{++}$]{Double extensions $A_4^{++}$ of $A_4$.}
\label{figA4pp}
\end{center}
\end{figure}}
}

Earlier, we have demonstrated that the induced affine extensions do not depend on the non-trivial automorphism properties of the simple and extended diagrams. Therefore, in summary, we conclude that the ten cases considered above are actually the only cases that arise in the context of trivial projection kernels. 

\section{Classification of induced affine extensions}\label{class}
The affine extensions induced via the projection in Section \ref{sec_single} (right arrow in Fig. \ref{EDAH_overview}) are subsets of the infinite families developed purely in a non-crystallographic framework in  \cite{DechantTwarockBoehm2011H3aff} (bottom arrow in Fig. \ref{EDAH_overview}). Therefore, we first summarise the relevant results from this paper in Subsection \ref{sec_sum}, and then analyse in Subsection  \ref{sec_class} how  the induced affine extensions  relate to our classification in  \cite{DechantTwarockBoehm2011H3aff}. 

\subsection{Construction of affine extensions in the non-crystallographic case}\label{sec_sum}

In the case of non-crystallographic Coxeter groups, which have Cartan matrices given in terms of  the extended integer ring   $\mathbb{Z}[\tau]$, the earlier definition of affine extensions from Section \ref{sec_std_single}  via  introducing  {affine hyperplanes} $H_{\alpha_0 ,i}$ as solutions to the equations $(x \vert \alpha_0 ) =i\,$, where $x\in \mathcal{E}$, $\alpha_0\in \Phi$ and $i\in\mathbb{Z}$, is not possible because the crystallographic restriction \cite{Senechal:1996}  implies that the planes cannot be stacked periodically; however, $i \in \mathbb{Z}[\tau]$ is too general because $\mathbb{Z}[\tau]$ is dense in $\mathbb{R}$.  

\condcomment{\boolean{includefigs}}{
\condcomment{\boolean{tikzfigs}}{
\begin{figure}

		\begin{tikzpicture}[
			    knoten/.style={        circle,      inner sep=.1cm,        draw}
			   ]
						\node at (0,1.5) (orig){};
			\node at (2+2.0,3) (knoten0) [knoten,  color=white!0!black] {};
			  \node at (1+2.0,1.5) (knoten1) [knoten,  color=white!0!black] {};
			  \node at (3+2.0,1.5) (knoten2) [knoten,  color=white!0!black] {};
			  \node at (1.2+2.0,2.3)  (sigma) {$\frac{5}{2}$};
			  \node at (2.8+2.0,2.3)  (sigma) {$\frac{5}{2}$};
			  \node at (2+2.0,1.75)  (tau) {$5$};
			  \path  (knoten1) edge (knoten2);
			  \path  (knoten1) edge (knoten0);
			  \path  (knoten0) edge (knoten2);
		\node at (10.3,2.2) (CM) 
		{\small{$A \left(H_2^{aff}\right) = \begin{pmatrix} 2&\tau'&\tau' \\ \tau'&2&-\tau\\ \tau'&-\tau&2 \end{pmatrix}$}};
		\end{tikzpicture}\\
		\begin{tikzpicture}[
			    knoten/.style={        circle,      inner sep=.1cm,        draw}
			   ]
			\node at (0,1.5) (orig){};
				  \node at (3+1.0,3) (knoten0) [knoten,  color=white!0!black] {};  
				  \node at (1+1.0,1.5) (knoten1) [knoten,  color=white!0!black] {};
				  \node at (3+1.0,1.5) (knoten2) [knoten,  color=white!0!black] {};
				  \node at (5+1.0,1.5) (knoten3) [knoten,  color=white!0!black] {};
				\node at (4+1.0,1.75)  (tau) {$5$};
				\node at (2.8+1.0,2.3)  (sigma) {$\frac{5}{2}$};
				  \path  (knoten1) edge (knoten2);
				  \path  (knoten2) edge (knoten3);
				  \path  (knoten0) edge (knoten2);
		\node at (10.7,2.1) (CM) 
		{\small{$A \left(H_3^{aff}\right)  = \begin{pmatrix} 2&0&\tau'&0 \\ 0&2&-1&0 \\ \tau'&-1&2&-\tau\\ 0&0&-\tau&2 \end{pmatrix}$}};
		\end{tikzpicture}\\
		\begin{tikzpicture}[
			    knoten/.style={        circle,      inner sep=.1cm,        draw}
			   ]
						\node at (0,0) (orig){};
				 \node at (-1,0) (knoten0) [knoten,  color=white!0!black] {};  
				  \node at (1,0) (knoten1) [knoten,  color=white!0!black] {};
				  \node at (3,0) (knoten2) [knoten,  color=white!0!black] {};
				  \node at (5,0) (knoten3) [knoten,  color=white!0!black] {};
				  \node at (7,0) (knoten4) [knoten,  color=white!0!black] {};

				\node at (0,0.25)  (tau) {$\frac{5}{2}$};
				\node at (6,0.25)  (tau) {$5$};

				\node at (4,1.5)  (tau2) {};

				  \path  (knoten0) edge (knoten1);
				  \path  (knoten1) edge (knoten2);
				  \path  (knoten2) edge (knoten3);
				  \path  (knoten3) edge (knoten4);
		\node at (10,0) (CM) 
		{\small{$A \left(H_4^{aff}\right)= \begin{pmatrix} 2&\tau'&0&0&0 \\ \tau'&2&-1&0&0 \\ 0&-1&2&-1&0 \\ 0&0&-1&2&-\tau \\ 0&0&0&-\tau&2 \end{pmatrix}$}};

		\end{tikzpicture}				
  \caption[Hiaff]{
Coxeter-Dynkin diagrams and Cartan matrices for (from top to bottom) $H_2^{aff}$, $H_3^{aff}$ and $H_4^{aff}$, the unique symmetric affine extensions in  \cite{Twarock:2002a}. 
Note that Coxeter angles of $2\pi/5$ lead to labels $\frac{5}{2}$ (or, $\tau'$ in the notation of  \cite{Twarock:2002a}).
 In  \cite{DechantTwarockBoehm2011H3aff}, we have found infinite families of  generalisations of these examples, which are obtained from the symmetric cases via scalings with $\tau$ and thus follow a Fibonacci recursion relation. We have also found more general examples, which likewise display this scaling property.}
\label{figHiaff}
\end{figure}}

\condcomment{\boolean{psfigs}}{
\begin{figure}
				\includegraphics[width=16cm]{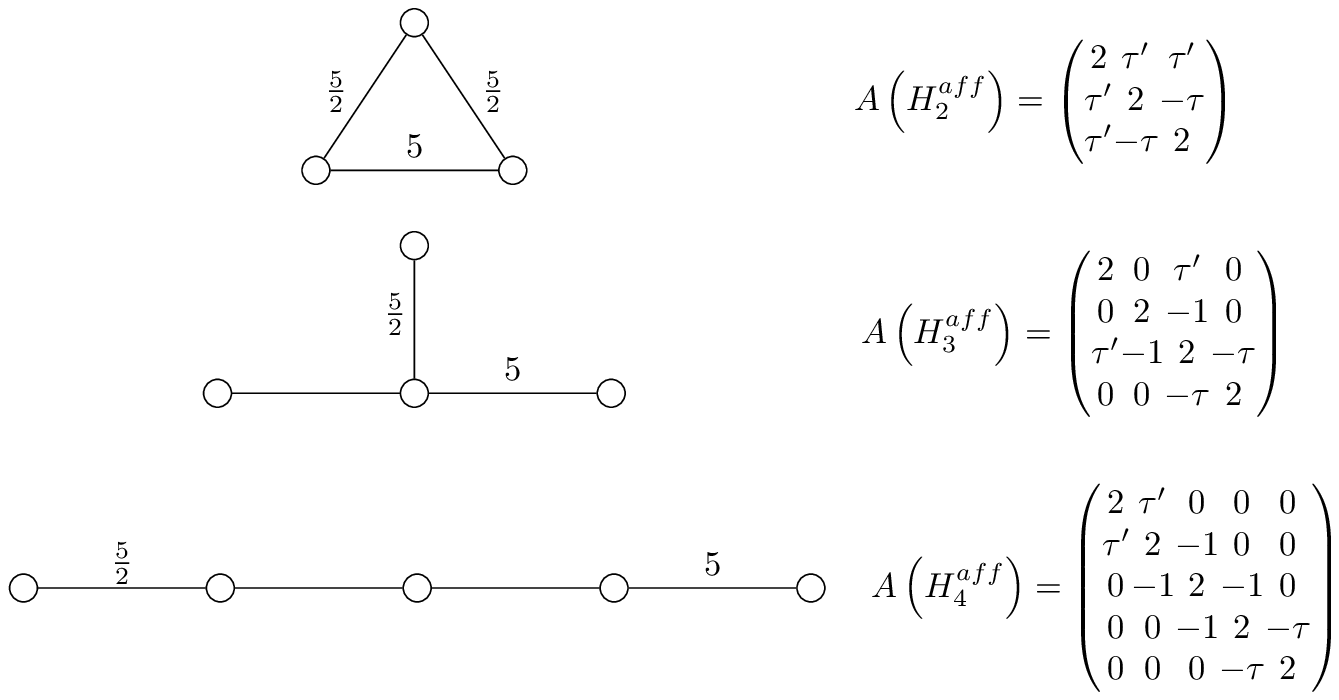}
  \caption[Hiaff]{
Coxeter-Dynkin diagrams and Cartan matrices for (from top to bottom) $H_2^{aff}$, $H_3^{aff}$ and $H_4^{aff}$, the unique symmetric affine extensions in  \cite{Twarock:2002a}. 
Note that Coxeter angles of $2\pi/5$ lead to labels $\frac{5}{2}$ (or, $\tau'$ in the notation of  \cite{Twarock:2002a}).
 In  \cite{DechantTwarockBoehm2011H3aff}, we have found infinite families of  generalisations of these examples, which are obtained from the symmetric cases via scalings with $\tau$ and thus follow a Fibonacci recursion relation. We have also found more general examples, which likewise display this scaling property.}
\label{figHiaff}

\end{figure}}
}

In contrast, the definition of affine extensions of non-crystallographic Coxeter groups via the Kac-Moody-type extensions of their Cartan matrices in Definition \ref{defKMtype} still works. 
For instance,   affine extensions along a 2-fold axis of icosahedral symmetry $T_2=(1,0,0)$ (i.e. along the simple roots) have the following general form \cite{DechantTwarockBoehm2011H3aff}
\begin{equation}
A = \begin{pmatrix} 2&0&x &0 \\ 0&2&-1&0 \\ y&-1&2&-\tau\\ 0&0&-\tau&2 \end{pmatrix},\label{CarH3affASfam2}
\end{equation}
since $T_2$ is orthogonal to two of the simple roots, and thus there is only one pair of off-diagonal entries that is non-zero. 
 This family of matrices contains   $H_3^=$  in  Eq. (\ref{CarH3affAS2}) as a special case.
For this type of matrix  to be affine, the  determinant constraint $\det A=xy-\sigma^2=0$ determines the product  $A_{13}A_{31}=xy$ of the non-zero entries $x$ and $y$  as $xy=2-\tau=\sigma^2$.
From the definition of the Cartan matrix, the products of the off-diagonal entries give the angle of the affine root with the simple roots, such that $xy$ gives the only non-trivial angle of the affine root with the simple roots.
 Corresponding extensions of $H_2$ and $H_4$ satisfy the same constraint. 
This determinant constraint therefore includes the symmetric affine extensions $H_i^{aff}$ ($i=2, 3, 4$) found in  \cite{Twarock:2002a}, which satisfy $x=y=\sigma$ ($\tau'$ in their notation) and are displayed in  Fig.  \ref{figHiaff}.
Writing $x=(a+\tau b)$ and $y=(c+\tau d)$ with $a,b,c,d \in \mathbb{Z}$, and denoting $(x,y)$ by the quadruplet $(a,b;c,d)$, the $H_i^{aff}$  correspond to the simplest case $(a,b;c,d)=(1, -1; 1, -1)$. The  units in $\mathbb{Z}[\tau]$ are the powers of $\tau$, so scaling $x\rightarrow \tau^{-k} x$ and $y\rightarrow \tau^k y$ ($k\in\mathbb{Z}$) leaves the product $xy$ invariant, and thus one can generate a series of solutions to the determinant constraint from any particular reference solution. In terms of quadruplets $(a,b;c,d)$, this scaling of the series of solutions by $(\tau^{-k}, \tau^k)$ amounts  to  $(a,b;c,d)\rightarrow(b-a, a; d, c+d)$, which is a Fibonacci defining relation. The  case  $x=y=\sigma$ is distinguished by its symmetry, and we choose to generate the whole \emph{Fibonacci series of solutions} from this particular reference solution. Thus, under the $\tau$-rescaling $(\tau^{-1}, \tau)$, $H_3^=$ is the `first' asymmetric example with $(x,y)=(-\sigma^2, -1)=(\tau-2, -1)$ corresponding to the quadruplet $(-2, 1;-1, 0)$. Since the powers of $\tau$ are the only units in $\mathbb{Z}[\tau]$, these solutions are in fact the only solutions to the determinant constraint.
The length of a root is given by $\sqrt{x/y}$, so that $\tau$-rescalings do not change the angle but generate affine roots of different lengths.  
There is thus a countably infinite set ($k\in \mathbb{Z}$) of  affine extensions of $H_i$ with affine reflection hyperplanes at  distances  $\tau^{k}/2$ from the origin. For any given $k$, there is  an infinite stack of parallel planes with separation $\tau^{k}/2$, including one containing the origin, which corresponds to one of the  reflections in the unextended group.

By choosing an ansatz similar to Eq. (\ref{CarH3affASfam2}), one can also obtain translations along 3- and 5-fold axes of icosahedral symmetry, $T_3=(\tau, 0, \sigma)$ and $T_5=(\tau, -1,0)$. These cases correspond to one pair of non-zero entries $(x,y)$ in the Cartan matrix  in the fourth row and column, and in the second row and column, respectively, since  these symmetry axes are again orthogonal to two of the simple roots. From now on, we will use the notation 
$A^{aff} = \begin{pmatrix}  2&\underbar{v}^T \\ \underbar{w}&A  \end{pmatrix}$ to write affine extensions succinctly in terms of the  vectors $\underbar{v}$ and $\underbar{w}$ in the additional row and column. Thus, affine extensions along a 3-fold axis correspond to $\underbar{v}=(0,0,x)^T$ and $\underbar{w}=(0,0,y)^T$, whilst affine extensions along a 5-fold axis correspond to $\underbar{v}=(x,0,0)^T$ and $\underbar{w}=(y,0,0)^T$.
These affine extensions lead to determinant constraints that do not have solutions in $\mathbb{Z}[\tau]$, e.g. $xy=\frac{4}{3}\sigma^2$ for an affine extension along a 3-fold axis. If one is prepared to relax the conditions on the Cartan matrix entries  to $\mathbb{Q}[\tau]$ instead of $\mathbb{Z}[\tau]$,  one can solve the determinant constraint  in $\mathbb{Q}[\tau]$, and generate similar families of affine extensions via $\tau$-rescalings from a particular reference solution. 
We introduce a pair of fractions $\gamma, \delta \in\mathbb{Q}$   and write the entries of the Cartan matrix as  $\mathbb{Z}[\tau]$-integers multiplied by $\gamma, \delta$ as $x=\gamma(a+\tau b)$ and $y=\delta(c+\tau d)$. The   solution is therefore now given by a  quadruplet $(a,b;c,d)$ plus multipliers $(\gamma, \delta)$. The latter need to make up the fraction in the determinant constraint, e.g. $\gamma\delta=\frac{4}{3}$ for an affine extension along a 3-fold axis. For instance, $(1,-1;1,-1)$ and $(1, \frac{4}{3})$ correspond to the solution $(\sigma, \frac{4}{3}\sigma)$ of length $\sqrt{x/y}=\frac{1}{2}\sqrt{3}=\frac{1}{2}|T_3|$.

For extensions along the 5-fold axis,  the determinant constraint   is proportional to  $(3-\tau)$, which cannot be solved symmetrically ($x=y$) in $\mathbb{Z}[\tau]$; one solution is, for instance, $(1,-2;1,-1)$. This implies that swapping $x$ and  $y$ produces a different solution  to the determinant constraint. Since the length of the affine root is given by $\sqrt{{x}/{y}}$, this generates a solution of different length. These two solutions are independent and  generate two Fibonacci families of  affine roots by $\tau$-multiplication. For the previous determinant constraints with symmetric solutions,  transposition and $\tau$-rescalings are equivalent, such that only one Fibonacci family arises.

In summary, one can thus label an affine extension in the following way: A solution is given in terms of an integer quadruplet $(a,b;c,d)$ that is related to a particular reference solution via rescaling with a power $k$ of $\tau$, together with a multiplier pair $(\gamma, \delta)$.

\subsection{Identification of the induced affine extensions in the Fibonacci classification}\label{sec_class}
 
We now identify the induced affine extensions derived in Section \ref{sec_single} within the Fibonacci families from  \cite{DechantTwarockBoehm2011H3aff} discussed in the previous Section \ref{sec_sum}.

The  affine extensions $H_i^{=}$ induced from the three simply-laced affine extensions of the crystallographic groups are all related to the $H_i^{aff}$s via $\tau$-rescalings. In particular, we have seen that $H_i^{=}$ corresponds to $(-2, 1;-1, 0)$, which is derived from the symmetric solution $(1,-1;1,-1)$ corresponding to $H_i^{aff}$ via rescaling $(x,y)$ as $(\tau^{-1}x, \tau y)$. Likewise, $\left(A(H_i^=)\right)^T$ corresponds to $(\tau x, \tau^{-1} y)$ and is equivalent to $\bar{H}_i^=$. The induced affine extensions found in this paper are therefore the `first' asymmetric solutions in  the Fibonacci family with the symmetric reference solution $H_i^{aff}$ obtained by rescaling with one power of $\tau$. Since for these examples, the determinant constraint $xy=\sigma^2$ is solved in $\mathbb{Z}[\tau]$, the multiplier pair $(\gamma, \delta)$ is trivially $(1,1)$. 

The determinant constraint for the other two induced affine extensions of $H_3$ found above, $H_3^<$ and $H_3^>$,  is    $xy=\frac{4}{5}(3-\tau)$, which has no symmetric solution. Thus,  two inequivalent series of solutions are generated by the quadruplets $(-1,0;-3,1)$ and $(-3,1;-1,0)$ or, equivalently, from their $\tau$-multiples $(1,-1;1,-2)$ and $(1,-2;1,-1)$. 
In our notation, the multiplier pair $(\gamma, \delta)=(\frac{4}{5}, 1)$ and the quadruplet $(-3,1;-1,0)$ give  $H_3^<$, and $(\gamma, \delta)=(\frac{2}{5}, 2)$ with $(-3,1;-1,0)$ gives $H_3^>$, such that they both belong to a Fibonacci family found in \cite{DechantTwarockBoehm2011H3aff}, represented by  $(1,-2;1,-1)$ and scaled by $(\tau^{-1}, \tau)$.   

We summarise the relation of all induced affine extensions to the Fibonacci classification \cite{DechantTwarockBoehm2011H3aff} in Table \ref{tabFibClassAll}.

\begin{thm}[Classification] The only  affine extensions of the non-crystallographic groups $H_4$, $H_3$ and $H_2$ induced via projection from at most simply-laced double extensions of $E_8$, $D_6$ and $A_4$ with trivial projection kernel are those in Table \ref{tabFibClassAll}, and  they are a subset of the Fibonacci Classification scheme presented in \cite{DechantTwarockBoehm2011H3aff}.
\end{thm}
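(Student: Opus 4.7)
The plan is to combine the enumeration of induced extensions carried out in Section~\ref{sec_single} with the exhaustion results of Section~\ref{sec_further}, and then match each induced Cartan matrix against the Fibonacci families recalled in Section~\ref{sec_sum}. First I would collect the ten induced matrices explicitly: the five obtained via $\pi_\parallel$ from $A_4^=$, $D_6^>$, $D_6^=$, $D_6^<$, $E_8^=$, namely $H_2^=$, $H_3^>$, $H_3^=$, $H_3^<$, $H_4^=$, together with their five Galois conjugates $\bar{H}_i^{>,=,<}$ obtained via $\pi_\perp$. The invariance lemma of Section~\ref{sec_autos} guarantees that the various choices of affine root under the Dynkin diagram automorphisms of the crystallographic or affine diagrams all lead to the same induced extension, so no additional cases can enter from that source.

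Second, I would invoke Lemmas~\ref{lem1} and~\ref{lem2} to rule out anything beyond single-node extensions. Lemma~\ref{lem1} forces any further single-node extension of $A_4^=$, $D_6^>$, $D_6^=$, $D_6^<$, $E_8^=$ to be disconnected and therefore to induce nothing new on the non-crystallographic side. Lemma~\ref{lem2} rules out all simply-laced affine double extensions of $A_4$, $D_6$, $E_8$ with trivial projection kernel. Together these confine the list of induced extensions within the stated hypothesis to exactly the ten matrices above.

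Third, I would embed each of the ten induced matrices into the Fibonacci classification of \cite{DechantTwarockBoehm2011H3aff} by exhibiting its quadruplet $(a,b;c,d)$ and its multiplier pair $(\gamma,\delta)$. For $H_i^=$, $i=2,3,4$, the induced affine root is parallel to the highest root $\alpha_H$ with length $\tau$ and the off-diagonal entries satisfy $xy=\sigma^2$; with $(\gamma,\delta)=(1,1)$ the quadruplet is $(-2,1;-1,0)$, which is exactly the $(\tau^{-1},\tau)$-rescaling of the reference $(1,-1;1,-1)$ representing $H_i^{aff}$. The Galois conjugates $\bar{H}_i^=$ correspond to the opposite rescaling and coincide with the transposes $A(H_i^=)^T$. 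For the 5-fold-axis cases $H_3^<$ and $H_3^>$ I would read the entries directly off~(\ref{CarH3affAS0}) and~(\ref{CarH3affAS1}), obtaining the multiplier pairs $(\gamma,\delta)=(\tfrac{4}{5},1)$ and $(\gamma,\delta)=(\tfrac{2}{5},2)$ respectively with the same quadruplet $(-3,1;-1,0)$, which sits in the Fibonacci family generated by the reference $(1,-2;1,-1)$ via the rescaling $(\tau^{-1},\tau)$. Table~\ref{tabFibClassAll} then records this embedding.

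The main obstacle is essentially bookkeeping rather than conceptual: one has to verify that the specific multipliers and quadruplets produced by $\pi_\parallel$ and $\pi_\perp$ really lie in the $\tau$-rescaling orbits catalogued in \cite{DechantTwarockBoehm2011H3aff}, in particular for $H_3^<$ and $H_3^>$ where the determinant constraint $xy=\tfrac{4}{5}(3-\tau)$ has no symmetric $\mathbb{Z}[\tau]$-solution so that the multiplier pair $(\gamma,\delta)\in\mathbb{Q}^2$ must be tracked consistently, and in checking that transposition plus Galois conjugation really identifies $A(H_i^=)^T$ with $\bar{H}_i^=$. Once these identifications are in place, the theorem follows by assembling the preceding lemmas.
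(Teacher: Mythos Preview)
Your proposal is correct and follows essentially the same route as the paper's own proof: enumerate the induced extensions from the five standard affine extensions via the Theorem on Induced Extensions, invoke the Invariance Lemma to dispose of Dynkin diagram automorphisms, apply Lemmas~\ref{lem1} and~\ref{lem2} to exclude further single-node and simply-laced double extensions, and then match against the Fibonacci families of \cite{DechantTwarockBoehm2011H3aff}. If anything, you are more explicit than the paper about the final embedding step, spelling out the quadruplets and multiplier pairs for each case, whereas the paper simply defers to \cite{DechantTwarockBoehm2011H3aff} and Table~\ref{tabFibClassAll}.
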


\begin{proof} The five known affine extensions of $E_8$, $D_6$ and $A_4$ give the induced extensions listed in the table. By the Invariance Lemma, the projection is invariant under the (extended) Dynkin diagram automorphisms. Further extensions by a single node would be disconnected, and the simply-laced double extensions are incompatible with the projection formalism via the two Lemmas \ref{lem1} and \ref{lem2} in Section \ref{sec_further}. Thus, no more cases arise in this setting. The classification was performed in \cite{DechantTwarockBoehm2011H3aff}.
\end{proof}

\begin{table}
\begin{centering}\begin{tabular}{|c||c||c|c|c||c|c|}
\hline
group&$xy$&$(a,b;c,d)_{ref}$&$k$&$(\gamma, \delta)$&$\underbar{v}^T$&$\underbar{w}^T$
\tabularnewline
\hline
\hline
$H_4^=$&$2-\tau$&$(1, -1; 1, -1)$&$-1$&$(1,1)$&$(\tau-2, 0, 0,0)$&$(-1,0,0,0)$
\tabularnewline
\hline
$H_3^=$&$2-\tau$&$(1, -1; 1, -1)$&$-1$&$(1,1)$&$(0, \tau-2, 0)$&$(0,-1,0)$
\tabularnewline
\hline
$H_3^<$&$\frac{4}{5}(3-\tau)$&$(1,-2;1,-1)$&$-1$&$(\frac{4}{5}, 1)$&$(\frac{4}{5}(\tau-3), 0, 0)$&$(-1, 0, 0)$
\tabularnewline
\hline
$H_3^>$&$\frac{4}{5}(3-\tau)$&$(1,-2;1,-1)$&$-1$&$(\frac{2}{5}, 2)$&$(\frac{2}{5}(\tau-3), 0, 0)$&$(-2, 0, 0)$
\tabularnewline
\hline
$H_2^=$&$2-\tau$&$(1, -1; 1, -1)$&$-1$&$(1,1)$&$(\tau-2, \tau-2)$&$(-1,-1)$
\tabularnewline
\hline
\end{tabular}\par\end{centering}
\caption[indclass]{\label{tabFibClassAll} Identification of the induced extensions of $H_2$, $H_3$ and $H_4$ within the Fibonacci classification: For $x=\gamma (a+\tau b), \,\, y=\delta (c+\tau d)$, the $\mathbb{Z}[\tau]$-quadruplet part $(a,b;c,d)$ of the solution $(x,y)$ to the determinant constraint $xy$ (second column) is given by scaling a representative reference solution within a Fibonacci family, e.g. one distinguished by its symmetry (see Section \ref{sec_sum}) like $(1,-1;1,-1)$, (third column) by a power $\tau^k$ of $\tau$ (fourth column). The rational part of the solution is contained in the multiplier pair $(\gamma, \delta)$ (fifth column). This contains all the information to construct   the row and column vectors $\underbar{v}$ and $\underbar{w}$ in the extended Cartan matrix for  $H_4^=$, $H_3^=$, $H_3^<$, $H_3^>$ and $H_2^=$, which are given in the  last two columns. The affine extensions induced by projection into the other invariant subspace  $\bar{H}_4^=$, $\bar{H}_3^=$, $\bar{H}_3^<$, $\bar{H}_3^>$ and $\bar{H}_2^=$ have  Cartan matrices that are Galois conjugate to the above five cases and that follow a similar Fibonacci classification in terms of $\sigma=1-\tau$. They are essentially equivalent to  transposes of the former type since the two sets of simple roots $a_i$ and $\bar{a}_i$ generate equivalent compact groups $H_i$. }
\end{table}

\section{Discussion}\label{sec_concl}

We have shown that via  affine extensions of the crystallographic root systems $E_8$, $D_6$ and $A_4$ (upper arrow in Fig. \ref{EDAH_overview}) and subsequent projection (dashed arrow), one obtains  affine extensions of the non-crystallographic groups $H_4$, $H_3$ and $H_2$ of the type considered in  \cite{DechantTwarockBoehm2011H3aff} (lower arrow). 
This provides an alternative construction of affine extensions of this type, and by placing them into the broader context of the crystallographic group $E_8$, we open up new potential applications in Lie theory, modular form theory and high energy physics. 
The ten induced extensions derived here  are a subset of the extensions in the Fibonacci classification scheme derived in  \cite{DechantTwarockBoehm2011H3aff}. The Fibonacci classification contains an infinity  of solutions of which the ones derived here  are thus a subset distinguished by the projection. 
For the simply-laced cases, the induced extensions $H_i^=$ and $\bar{H}_i^=$ (since this is equivalent to $A(H_i^=)^T$)
can be derived from  the symmetric solutions $H_i^{aff}$ from \cite{Twarock:2002a} via rescaling with $\tau$ 
and are in that sense the `first' asymmetric members of the corresponding Fibonacci families of solutions. These distinguished affine extensions could thus have a special r\^ole in practical applications, e.g. in quasicrystal theory, virology and carbon chemistry. 

The induced extensions are  $\mathbb{Z}[\tau]$-valued in the simply-laced cases, and $\mathbb{Q}[\tau]$-valued  for the other two non-simply-laced cases. 
This  suggests to  further generalise the Kac-Moody framework of \cite{DechantTwarockBoehm2011H3aff} to
allow extended number fields in the entries in the extended Cartan matrices of $H_2$, $H_3$ and $H_4$; this could be $\mathbb{Q}[\tau]$, but a milder extension might also suffice. 
One could therefore argue to also allow corresponding generalisations in the extended Cartan matrices of  $E_8$, $D_6$ and $A_4$, from which the non-crystallographic cases are obtained via projection.  
Such a generalisation might lead to interesting mathematical structures and could open up novel applications in hyperbolic geometry and rational conformal field theory, where similar fractional values can occur \cite{Petkova2000BCsinRCFT, Petkova1998ClassBBCFT, Coxeter1973regular}. Various other approaches hint at this same generalisation, which we now explore in turn. 

In particular, the projection (left arrow in Fig. \ref{EDAH_overview}) is in fact one-to-one, since integer Cartan matrix entries in the higher-dimensional setting project onto $\mathbb{Z}[\tau]$-integers in half the number of dimensions, and the two parts in a $\mathbb{Z}[\tau]$-integer do not mix. 
This is due to the irrationality of the projection angle, which projects a lattice in higher dimensions to an aperiodic quasilattice in lower dimensions, without a null space of the same dimension.
Thus, one can invert the projection by  `lifting' the affine roots and thereby Cartan matrices of the non-crystallographic groups considered in the Fibonacci classification in  \cite{DechantTwarockBoehm2011H3aff} to those of the crystallographic groups (i.e. by inverting the dashed arrow). 
We now lift (denoted by $L$)  such extended Cartan matrices of $H_4$, $H_3$ and $H_2$ in order to determine what type of  extensions of  $E_8$, $D_6$ and $A_4$  could  induce  them via projection (denoted by $P$).  Again, we denote generic extensions $E_8^+$, $D_6^+$ and $A_4^+$ by their additional row and column vector in the Cartan matrix as follows

$$A \left(E_8^+\right) = \begin{pmatrix}
   2&\underbar{v}^T
\\ \underbar{w}&E_8
 \end{pmatrix}, \,\,\,
A \left(D_6^+\right) = \begin{pmatrix}
   2&\underbar{v}^T
\\ \underbar{w}&D_6
 \end{pmatrix} \text{ and }
A \left(A_4^+\right) = \begin{pmatrix}
   2&\underbar{v}^T
\\ \underbar{w}&A_4
 \end{pmatrix}.$$

We begin by lifting the affine extensions  $H_i^{aff}$ from  \cite{Twarock:2002a}, which are the symmetric special cases in the Fibonacci families of solutions. One might have thought intuitively that these $H_i^{aff}$ would arise via projection, rather than the $H_i^{=}$. It is thus interesting to lift the affine roots of the $H_i^{aff}$ to the higher-dimensional setting and to see which Cartan matrix they would therefore give rise to. For example, for $H_4^{aff}$ the vectors giving the additional row and column in the Cartan matrix are given by
\begin{equation}
L\left(A\left(H_4^{aff}\right)\right)=	 \begin{pmatrix}
	   2&\underbar{v}^T
	\\ \underbar{w}&E_8
	 \end{pmatrix} \text{ with } \underbar{v}^{aff}_4=\underbar{w}^{aff}_4=(1,0,0,0,0,0,-1,0)^T\label{H4afflift}.
\end{equation}
Similarly, the vectors  corresponding to $H_3^{aff}$ and $H_2^{aff}$ are  given by
$\underbar{v}^{aff}_3=\underbar{w}^{aff}_3=(0,1,0,-1,0,0)^T$ and
$\underbar{v}^{aff}_2=\underbar{w}^{aff}_2=(1,-1,-1,1)^T$, respectively.
We note that the lifted versions of the symmetric extensions $H_i^{aff}$ are also symmetric.
However, the requirement of non-positivity of the off-diagonal Cartan matrix entries that is usual in the Lie algebra context is not satisfied by these matrices. This is in agreement with the fact that the only standard affine extensions of   $E_8$, $D_6$ and $A_4$ are the five cases presented in Section \ref{sec_std_single}. The lifted versions of  $H_i^{aff}$ could thus motivate to relax this requirement of non-positivity in order to arrive at an interesting more general class of Cartan matrices.  


Analogously, one can consider  lifting  the transposes of the Cartan matrices obtained earlier  that are induced from $\pi_\parallel$ (c.f. equations (\ref{CarH4affAS})-(\ref{CarH2affAS})), e.g.  $\left(A\left(H_4^=\right)\right)^T$. In particular,  they are also contained in the Fibonacci classification of affine extensions  in  \cite{DechantTwarockBoehm2011H3aff} (see Section \ref{sec_sum}). They are in fact also related to the extensions induced by $\pi_\perp$, since they give rise to equivalent compact parts with the same translation lengths. 
For example, lifting  the transpose $\left(A\left(H_4^=\right)\right)^T$ of $A\left(H_4^=\right)$ in Eq. (\ref{CarH4affAS}) gives the following matrix in 9D
\begin{equation}
LTP\left(A\left(E_8^=\right)\right)=L\left(\left(A\left(H_4^=\right)\right)^T\right)=	 \begin{pmatrix}
	   2&\underbar{v}^T
	\\ \underbar{w}&E_8
	 \end{pmatrix} \text{ with } \underbar{v}^{=}_4=\frac{1}{2}\underbar{w}^{=}_4=(-1,0,0,0,0,0,\frac{1}{2},0)^T\label{CarH4affAST},
\end{equation}
%
where we have denoted the combination of projecting the affine extension of $E_8$, transposing, and lifting again,  by $LTP$. 
We  note that there are again positive, but now also fractional entries -- neither occurs in the context of  simple Lie  Theory. We also observe that the consistency conditions (the Lemma in  \cite{DechantTwarockBoehm2011H3aff}) stipulated in our previous paper are still obeyed. 
%
%
One may find rational entries surprising, but these actually arise naturally in the context of the affine extensions considered in the non-crystallographic setting \cite{DechantTwarockBoehm2011H3aff}, e.g. $H_3^<$. Perhaps, therefore generalising integer to rational  entries also in the higher-dimensional crystallographic case could lead to interesting new mathematical structures. 
The other cases correponding to $D_6^<$, $D_6^>$, $D_6^=$ and $A_4^=$ are given by
$\underbar{v}^<_3=\frac{5}{12}\underbar{w}^<_3=(-1,0,0,0,0,\frac{1}{3})^T$, 
$\underbar{v}^>_3=\frac{5}{3}\underbar{w}^>_3=(-2,0,0,0,0,\frac{2}{3})^T$,
$\underbar{v}^=_3=\frac{1}{2}\underbar{w}^=_3=(0,-1,0,\frac{1}{2},0,0)^T$, and
$\underbar{v}^=_2=\frac{1}{2}\underbar{w}^=_2=(-1,\frac{1}{2},\frac{1}{2},-1)^T$, respectively. 
In an analogous manner, one could proceed to lift all the solutions in the Fibonacci family rather than just $H_i^{aff}$ and $A\left(H_i^{=}\right)^T$, but these instructive examples shall suffice to give some indication towards the  generalisations that arise.

 
%
%
%
As explained in Definition \ref{defsym} in Section \ref{sec_proj_aff}, in the Coxeter group and Lie Algebra contexts, one is often interested in symmetrisable Cartan matrices. 
For completeness, we therefore present here the symmetrised version of $LTP\left(A\left(E_8^=\right)\right)$, which we denote by $SLTP$, as an example of which type of matrix arises through symmetrisation
\begin{equation}
SLTP\left(A\left(E_8^=\right)\right)=	\begin{pmatrix}
	   1&\underbar{v}^T
	\\ \underbar{w}&E_8\end{pmatrix} \text{ with } \underbar{v}^T=\underbar{w}^T=(-1,0,0,0,0,0,\frac{1}{2},0)\label{E8liftsym}.
\end{equation}
Here we have relaxed the requirement that the symmetrised matrix be integer-valued, since even before symmetrisation, the Cartan matrix has fractional values. Again, positive and fractional values arise. This matrix is  positive semi-definite, which was expected as that corresponds to the affine case. 
Thus, even the criterion of symmetrisability suggests positive entries and extending the number field for the Cartan matrix entries to $\mathbb{Q}[\tau]$ or $\mathbb{Z}[\tau]+\frac{1}{2}\mathbb{Z}[\tau]$.

In summary, we have provided a novel, alternative construction of affine extensions of the type considered in \cite{DechantTwarockBoehm2011H3aff} from the two familiar concepts of affine extensions of crystallographic groups and projection of root systems. 
This construction results in a special subset of the point arrays used in mathematical virology and carbon chemistry that is distinguished via the projection, which could therefore play a special r\^ole in applications.
It also extends the quasicrystal framework considered in  \cite{Twarock:2002a} to a  wider class of quasicrystals. 
We furthermore made the case for admitting extended number fields in the Cartan matrix.
These extended number fields arise in a variety of cases, namely in the lower-dimensional picture \cite{DechantTwarockBoehm2011H3aff},  via projection, via lifting the Fibonacci family of solutions (including $H_i^{aff}$ from \cite{Twarock:2002a} and transposes of $H_i^{=}$) and via symmetrising.   
Fractional entries in  Cartan matrices  arise in hyperbolic geometry and rational conformal field theory.
Our construction here is thus another example of such fractional entries that could open up a new type of analysis and enticing possibilities in these fields.

\section*{Acknowledgements}
We would like to thank the anonymous referee and Prof Vladimir Dobrev for a careful reading of the manuscript and many valuable suggestions. RT gratefully acknowledges support via a Research Leadership Award from the Leverhulme Trust that has provided funding for PPD.



\end{document}